\newtheorem{theorem}{Theorem}[section]
\newtheorem{lemma}[theorem]{Lemma}
\newtheorem{proposition}[theorem]{Proposition}
\newtheorem{corollary}[theorem]{Corollary}
\newtheorem{example}[theorem]{Example}
\newtheorem{definition}[theorem]{Definition}
\newtheorem{reduction}{Reduction Rule}
\title{Fairness Concepts for Indivisible Items with Externalities}
\author{
Haris Aziz,\textsuperscript{\rm 1}
Warut Suksompong,\textsuperscript{\rm 2}
Zhaohong Sun,\textsuperscript{\rm 3}
Toby Walsh\textsuperscript{\rm 1}
}
\begin{document}

\maketitle

\begin{abstract}
We study a fair allocation problem of indivisible items under additive externalities in which each agent also receives utility from items that are assigned to other agents. This allows us to capture scenarios in which agents benefit from or compete against one another. We extend the well-studied properties of envy-freeness up to one item (EF1) and envy-freeness up to any item (EFX) to this setting, and we propose a new fairness concept called general fair share (GFS), which applies to a more general public decision making model. We undertake a detailed study and present algorithms for finding fair allocations. 
\end{abstract}

\section{Introduction}

Fair allocation of indivisible items is an active field of research within computer science and economics \citep{BrTa96a,BCM15a,Thom15a}. The general problem is to allocate the items among the agents so as to satisfy certain fairness criteria. For example, one important fairness concept is envy-freeness, which stipulates that no agent wants to swap her bundle with another agent's bundle. The field has witnessed several new solution concepts, algorithms, and applications. 
    
In most of the work on fair allocation, agents are assumed to derive value \emph{only} from the set of items allocated to them. In this paper, we consider a significantly more general model in which an agent's value for an allocation may depend both on the agent's own bundle as well as on  the bundles of items given to other agents. The latter aspect is referred to in the economics literature as \emph{externalities}. Whereas the theory of fair allocation has progressed tremendously, the topic is relatively less developed when externalities are involved in the valuations of the agents. 

Externalities in agent preferences are present in many real-world scenarios. When resources are allocated among agents, an agent may derive positive value from resources given to the agent's friend or family member because the agent has access rights to the resource. Positive externalities can also capture settings where agents are divided into groups and each agent receives the same utility whenever some agent in the group is allocated an item, and no utility when an agent outside the group is allocated the item. Likewise, negative externalities can arise in various resource allocation settings. For example, when dividing assets among conflicting groups, the allocation of a critical asset to another group may hamper one group's functionality. Yet another example of negative externalities is the case of sport drafts, where a team may incur negative value if a valuable player is given to a competing team.

Although externalities have been considered in some prior work on resource allocation problems, the focus was on either allocation of \emph{divisible} resources \citep{SPZ13a,LZZ15a} or concepts based on maximin fair share~\citep{SSG21a}.
In this paper, we revisit important fairness concepts such as envy-freeness and consider suitable relaxations in the context of indivisible items under externalities.

Since there may not exist an envy-free allocation in general, much of the recent research has focused on relaxations of envy-freeness by removing one or more items from consideration, e.g., \emph{envy-freeness up to one item (EF1)} \citep{LMMS04a,Budi11a}. \citet{CKM+19a} proposed a stronger concept than EF1 called \emph{envy-freeness up to any item (EFX)}. The intuition is that if agent $i$ envies agent $j$'s assignment, then the envy should be eliminated when any item is removed from $j$'s assignment. \citet{ACIW19a} generalized EFX to the setting of goods and chores (i.e., negative values) when there are no externalities. The difference is that the envy from $i$ towards $j$ can be eliminated by removing agent $i$'s least preferred good from $j$'s bundle, and also by removing agent $i$'s favorite chore from $i$'s own bundle.
 
For allocation problems under externalities, envy-freeness needs to be carefully extended. 
When we consider fair allocation of goods without externalities, if agent $i$ envies agent $j$'s assignment, then removing any item from agent $j$'s bundle decreases the envy. However, this no longer holds when externalities exist. For instance, assume that agent $i$ receives value $5$ when item $a$ is assigned to agent $i$ and receives value $10$ when $a$ is assigned to agent $j$. In that case, it is unclear that removing item $a$ from $j$'s bundle decreases $i$'s ``envy'' towards $j$, since $i$ actually derives more utility when the item is allocated to $j$ than when it is allocated to $i$ herself. 
This issue becomes more complicated when both positive and negative externalities are allowed.

Another widely studied fairness concept under additive valuations is \emph{proportionality}, which requires each agent to receive at least $1/n$ of the value that she has for the set of all items, where $n$ denotes the number of agents. \citet{CFS17a} proposed a variant of proportionality for a more general \emph{public decision making} problem than allocation of indivisible items under externalities. They showed that their concept is guaranteed to be feasible under positive valuations. 
However, this guarantee ceases to hold when negative valuations are also allowed.

We summarize our contributions as follows.
First, we define the concepts of EF1 and EFX under externalities that still coincide with previous definitions for goods and chores when externalities do not exist.
Note that our new concepts work for both positive and negative externalities. 

Second, we show how to compute an EFX allocation between two agents in time $O(m \log m)$, where $m$ denotes the number of items, and how to compute an EF1 allocation between two agents in time linear in $m$.

Third, we show that the set of EFX allocations among three agents could be empty. Under binary values and a ``no-chore'' assumption, we show that an EF1 allocation always exists among three agents by proposing a new algorithm that computes such an allocation in polynomial time. 

Fourth, we propose a new fairness concept called \emph{general fair share (GFS)} based on proportionality. We present a polynomial-time algorithm that computes an allocation satisfying general fair share up to one item (GFS1) for the more general public decision making model where both positive and negative valuations are allowed. 

Finally, we present a taxonomy of fairness definitions including both existing and newly proposed concepts. 

\section{Related Work}
Fair allocation of indivisible items is an active topic of research in computer science and economics~\citep{BrTa96a,BCM15a,Thom15a}. For some recent overviews, we refer to the surveys of \citet{ABF+22a} and \citet{ALM+22a}.

For allocation problems under externalities, fairness concepts need to be carefully revisited and extended. \citet{Vele16a} proposed a natural adaptation of envy-freeness which requires that no agent prefers the allocation obtained by swapping her bundle with another agent. 
\citet{SPZ13a} considered both the envy-freeness concept of \citet{Vele16a} and proportionality in the context of cake-cutting. 
\citet{LZZ15a} studied truthful mechanisms in the setting of cake-cutting under externalities.
Since cake-cutting involves the allocation of a divisible resource, one can obtain existence results without relaxations even when externalities are present.

In our paper, we focus on allocation of indivisible items. \citet{SSG21a} presented an algorithm for computing an allocation that satisfies a relaxation of a concept called maximin share fairness, which can in turn be viewed as a relaxation of proportionality.
Note that both \citet{SSG21a} and \citet{SPZ13a} restricted their attention to settings with positive externalities, whereas we allow both positive and negative externalities.
\citet{LZZ15a} made the restrictive assumption that agents derive externalities that are percentages of other agents' values.
\citet{MPG21a} studied a special form of externalities in which an agent receives the same externality from an item regardless of which other agent receives the item.

A related line of work concerns house or residential allocation with externalities, where an agent's value for an allocation is influenced by other agents assigned to her neighborhood \citep{CLM18a,MaSi19a,EPTZ20a,AEGI21a,GBBM21a}.

\section{Model}

We consider a setting where a set of indivisible items $A = \{a_1, \dots, a_m\}$ are to be allocated among a set of agents $N = \{1, \dots, n\}$ under additive externalities. 

An allocation is denoted by $\pi = (\pi_1, \dots, \pi_n)$ where each $\pi_i \subseteq A$ is the bundle assigned to agent $i$ such that for any distinct $i,j \in N$, we have $\pi_i \cap \pi_j = \emptyset$. If $\bigcup_{i \in N} \pi_i = A$, then we call $\pi$ a \emph{complete} allocation of $A$. Unless specified otherwise, we only consider complete allocations. Let $\Pi$ denote the set of all allocations. For any item $a \in A$, let $\pi(a)$ denote the agent who receives item $a$ in allocation $\pi$.

Every agent $i \in N$ is associated with a valuation function $V_i: \Pi \rightarrow \mathbb{R}$, which assigns a real value to every allocation $\pi \in \Pi$. 
We assume that agents have additive valuations and externalities.
Under the additive preference domain, we have 
$V_i(\pi)$ $=$ $\sum_{a \in A}$ $V_i(\pi(a), a)$, where we abuse notation and let $V_i(j, a)$ represent the value that agent $i$ receives when item~$a$ is assigned to agent~$j$. Note that in problems without externalities, an agent receives the same value from an allocation as long as the agent receives the same bundle.

\section{EF1 and EFX under Externalities}

In this section, we consider how to generalize the definitions of EF1 and EFX to the setting of externalities. Note that we need to carefully design both definitions to ensure that they coincide with the previous definitions without externalities.

\citet{Vele16a} proposed a natural adaptation of envy-freeness which requires that no agent prefers the allocation obtained by swapping her bundle with another agent's bundle. Since this notion has become the standard of envy-freeness in the setting of externalities, we simply refer to it as envy-freeness.
In this work, we follow this idea of \emph{swapping bundles} to define EF1 and EFX.
Let $\pi^{i \leftrightarrow j}$ represent a new allocation in which only agents $i$ and $j$ swap their bundles in $\pi$ while other agents' bundles remain the same.

\begin{definition}[Envy-Freeness~\citep{Vele16a}]
\label{def:SEF}
An allocation $\pi$ is envy-free (EF) if there do not exist agents $i, j \in N$ such that 
$V_i(\pi^{i \leftrightarrow j}) > V_i(\pi)$.
\end{definition}

Recall that envy-freeness cannot be guaranteed in the indivisible domain even if there are two agents and one item, and the agents have no externalities. In view of this challenge, a natural recourse is to explore ``up to one item relaxations'' of fairness concepts. 
The intuition is that when an agent is envious, she would like to swap her bundle with another agent. The ``up to $k$'' relaxation ensures that such a swap is not desirable if at most $k$ items are removed from consideration. 
We next formalize an ``up to $k$ items relaxation'' of EF under externalities. 
      \begin{definition}[Envy-Freeness up to $k$ Items]
    \label{def:EFc_new}
   An allocation $\pi$ is envy-free up to $k$ items (EF$k$) if for every pair of agents $i,j\in N$, there exists a set of items $C \subseteq A$ and an allocation $\lambda$ such that the following conditions hold:
      \begin{enumerate}
        \item $|C|\leq k$;
        \item $\lambda_{\ell}=\pi_{\ell} \setminus C$ for all $\ell \in N$;
        \item  $V_i(\lambda)\geq V_i(\lambda^{i\leftrightarrow j})$.
      \end{enumerate}
  \end{definition}
  
In words, Definition~\ref{def:EFc_new} states that an allocation $\pi$ is EF$k$ if for each pair of agents $i$ and $j$, there exists a set of items $C$ of size at most $k$ such that for the new allocation $\lambda$ obtained by removing items in $C$ from each agent $\ell$'s bundle $\pi_{\ell}$ in $\pi$, agent $i$ would not like to swap her bundle $\lambda_{i}$ with agent~$j$'s bundle $\lambda_j$.
Note that if $k=1$ and there are no externalities, then Definition~\ref{def:EFc_new} coincides with the EF1 concept as formalized by \citet{Budi11a} for goods and by \citet{ACIW19a} for goods and chores.

  We next generalize EFX to the case of externalities. 
  A first attempt is to define the generalization so that if agent $i$ envies agent $j$, then removing any item from either of their bundles should eliminate the envy.
  However, this fails to capture the original idea of~\citet{CKM+19a} when externalities exist, as shown in Example~\ref{exmaple:EFX_motivation}.

\begin{example}
\label{exmaple:EFX_motivation}
Consider two agents $N = \{1, 2\}$ and three items $A = \{a, b, c\}$. The values of items and externalities are described in Table~\ref{table:SEFX}. 
For allocation $\pi = \{(1, ab), (2, c)\}$ in which agent $1$ receives items $a,b$ and agent $2$ receives item~$c$, agent $2$ has envy towards agent $1$:
  \[
  V_2(\pi) - V_2(\pi^{1 \leftrightarrow 2}) = (1+2+2) - (4 + 1 + 3) = -3.
  \]
  If we remove item $b$ from agent $1$'s bundle, then agent~$2$ envies agent $1$ even more. 
  That is, for allocation $\tilde{\pi}$ $= \{(1, a), (2, c)\}$,
  \[
  V_2(\tilde{\pi}) - V_2(\tilde{\pi}^{1 \leftrightarrow 2}) = (1+2) - (4+3) = -4.
  \]
\begin{table}[h]
\centering
\begin{tabular}{l|c|c|c}
 \hline
 & $a$ & $b$ & $c$
 \\
 \hline
 $1$ & $3, 1$ & $1, 2$ & $2, 1$ \\
  \hline
$2$ & $1, 4$ & $2, 1$ & $3, 2$ \\
 \hline
\end{tabular}
\caption{For each row $i \in \{1, 2\}$ and each entry $(x,y)$ in row~$i$, $x$ and $y$ denote the value that agent $i$ receives when the corresponding item is assigned to agent $1$ and $2$, respectively.}
\label{table:SEFX}
\end{table}
\end{example}

When we consider only goods (i.e., indivisible items with positive values) without externalities, if some agent $i$ envies another agent $j$, 
then removing any item from agent $j$'s bundle decreases the envy. However, this is no longer true when externalities exist.
As shown in Example~\ref{exmaple:EFX_motivation}, removing item~$b$ from agent~$1$'s bundle does not reduce the envy from agent~$2$ towards agent~$1$. Instead, it increases this envy.

We thus propose a more suitable generalization of EFX in Definition~\ref{def:EFX}. Intuitively, if agent $i$ envies agent $j$, then for any item $a$ such that removing $a$ from the bundle of agent $i$ or $j$ reduces the envy, $i$ should no longer envy $j$ after removing $a$. This idea coincides with the definition of EFX for goods and chores when there are no externalities \citep{ACIW19a}. Recall that in the definition by \citet{ACIW19a}, agent $i$'s envy towards agent $j$ can be eliminated by removing $i$'s least preferred good from $j$'s bundle as well as by removing $i$'s favorite chore (i.e., one yielding the least disutility) from $i$'s own bundle.

    \begin{definition}[Envy-Freeness up to Any Item]
    \label{def:EFX}
            An allocation $\pi$ is envy-free up to any item (EFX) if for all agents $i,j\in N$, if $i$ envies $j$, then for any item $a \in A$ and allocation $\lambda$ with the properties
            \begin{enumerate}
              \item $\lambda_{\ell}=\pi_{\ell}\setminus \{a\}$ for all $\ell\in N$ and
              \item $V_i(\lambda) - V_i(\lambda^{i \leftrightarrow j}) > V_i(\pi) - V_i(\pi^{i\leftrightarrow j}),$
            \end{enumerate}
          the following holds: 
            $$V_i(\lambda)\geq V_i(\lambda^{i\leftrightarrow j}).$$
  \end{definition}

We next explain Definition~\ref{def:EFX} in detail. $V_i(\pi) - V_i(\pi^{i\leftrightarrow j})$ represents the envy from agent $i$ towards agent $j$ with respect to allocation $\pi$. Because agent $i$ envies agent $j$, we have $V_i(\pi) < V_i(\pi^{i\leftrightarrow j})$, which implies $V_i(\pi)$ $-$ $V_i(\pi^{i\leftrightarrow j})$ $<$ $0$. Allocation $\lambda$ is obtained by removing some item $a$ from the bundle $\pi_{\ell}$ containing $a$.
(Note that if $a\not\in\pi_i\cup \pi_j$, then the envy of $i$ towards $j$ does not change upon removing $a$, so we may assume that $a\in\pi_i\cup \pi_j$.)
Similarly, $V_i(\lambda) - V_i(\lambda^{i \leftrightarrow j})$ represents the envy from $i$ towards $j$ with respect to the new allocation $\lambda$. Thus $V_i(\lambda) - V_i(\lambda^{i \leftrightarrow j}) > V_i(\pi) - V_i(\pi^{i\leftrightarrow j})$ means that removing item $a$ reduces the envy of $i$ towards $j$. Finally, $V_i(\lambda)\geq V_i(\lambda^{i\leftrightarrow j})$ requires that $i$ does not envy $j$ with respect to the new allocation $\lambda$.

Note that our new definition of EFX in Definition~\ref{def:EFX} still implies EF1 in Definition~\ref{def:EFc_new}. 
To see this, consider an EFX allocation $\pi$. 
For any pair of agents $i$ and $j$, if $i$ envies $j$ in allocation $\pi$, then because valuations and externalities are additive, there must exist an item $a \in A$ such that removing $a$ from either bundle $\pi_i$ or $\pi_j$ helps decrease the envy of $i$ towards $j$.
Since $\pi$ is EFX, removing $a$ must eliminate $i$'s envy towards $j$.
Thus the allocation $\pi$ is also EF1. We next show an example of an EF1 allocation that is not EFX.

\begin{example}
\label{example:EF1_not_EFX}
Consider the instance in Example~\ref{exmaple:EFX_motivation}. Allocation $\pi' = \{(1, bc), (2, a)\}$ is EF1 but not EFX for agent~$1$.
To see this, first note that agent $1$ envies agent $2$:
  \[
  V_1(\pi') - V_1(\pi'^{1 \leftrightarrow 2}) = (1+2+1) - (3+2+1) = -2.
  \]
If we remove item $a$ from agent $2$'s bundle $\pi_2'$, then agent $1$ does not envy agent~$2$.
  That is, for allocation $\hat{\pi}$ $= \{(1, bc), (2, \emptyset)\}$,
  \[
  V_1(\hat{\pi}) - V_1(\hat{\pi}^{1 \leftrightarrow 2}) = (1+2) - (2+1) = 0.
  \]
On the other hand, if we remove item $b$ from agent $1$'s bundle $\pi_1'$, which helps decrease agent~$1$'s envy, then agent $1$ still envies agent $2$.  That is, for allocation $\overline{\pi}$ $= \{(1, c), (2, a)\}$,
  \[
  V_1(\overline{\pi}) - V_1(\overline{\pi}^{1 \leftrightarrow 2}) = (1+2) - (3+1) = -1.
  \]
\end{example}

\section{Two Agents}
\label{sec:two-agents}

In this section, we prove the existence of EFX allocations between two agents by mapping onto a simplified problem where agents have ``symmetric valuations'' and an EFX allocation can be constructed in polynomial time. 
In Lemma~\ref{lemma:2same}, we show how to construct an EFX allocation between two agents when valuations are symmetric. Based on this result, we then prove the existence of EFX allocations between two agents in Theorem~\ref{theo:EFX:two}. 
We further show that an EF1 allocation between two agents can be computed in linear time.

We say that agents' valuations are \emph{symmetric} if for each pair $i,j\in N$ and each item $a \in A$, the following holds: $V_i(i,a)=V_j(j,a)$ and $V_i(j,a)=V_j(i,a)$. 

\begin{lemma}
\label{lemma:2same}
An EFX allocation always exists for two agents with symmetric valuations and can be computed in time $O(m \log m)$.
\end{lemma}

\begin{proof}
Consider two agents with symmetric valuations. For an item $a$, let $\Delta_{12}(a) = V_1(1, a) - V_1(2, a)$ be the difference between the value $V_1(1, a)$ that agent $1$ receives when $a$ is assigned to agent $1$ and the value $V_1(2, a)$ that she receives when $a$ is assigned to agent $2$.
Since valuations are symmetric, $V_1(1,a)-V_1(2,a) = V_2(2,a) - V_2(1,a)$.

Create an allocation $\tilde{\pi}$ as follows. We iteratively allocate each item in decreasing order of $|\Delta_{12}(\cdot)|$. At each step, there are two possible bundles for the item, leading to two different allocations. 
We choose one that the agent with the smaller current total value weakly prefers from these two allocations. That is, if $\Delta_{12}(a) \geq 0$, then the item is assigned to the agent with the smaller current total value; otherwise the item is assigned to the other agent. Break ties arbitrarily.

We next prove that allocation $\tilde{\pi}$ is EFX for both agents. Suppose we allocate all items in the order $a_1, a_2, \ldots, a_m$.
For the base case, assigning $a_1$ to either agent is EFX. For the induction, assume that a partial allocation of items $a_1$, $\ldots$ , $a_{k}$ is EFX. 
Since the agents have symmetric valuations, at most one agent can be envious. Without loss of generality, assume agent $1$ has at most the same value as agent~$2$ and 
the algorithm allocates $a_{k+1}$ according to agent $1$'s preference. Then agent $1$'s envy towards agent $2$ weakly decreases and the allocation is still EFX for agent $1$. If agent $2$ becomes envious, then removing item $a_{k+1}$ will eliminate the envy. For any item $a_j$ allocated to agent $1$ with $\Delta_{12}(a_j) > 0$, we have $\Delta_{12}(a_j) \geq \Delta_{12}(a_{k+1})$ and removing any such item will eliminate the envy from agent $2$ as well; a similar argument holds for any item $a_j$ allocated to agent~$2$ with $\Delta_{12}(a_j) < 0$. Thus the allocation remains EFX for both agents. 

We can sort all items in decreasing order of $|\Delta_{12}(\cdot)|$ in time $O(m \log m)$ and thus we can compute an EFX allocation between two agents with symmetric valuations in polynomial time. This completes the proof of Lemma~\ref{lemma:2same}.
\end{proof}

Based on Lemma~\ref{lemma:2same}, we prove the existence of EFX allocation between two agents in Theorem~\ref{theo:EFX:two}.
\begin{theorem}
\label{theo:EFX:two}
There always exists an EFX allocation between two agents which can be computed in time $O(m \log m)$.
\end{theorem}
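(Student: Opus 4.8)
The plan is to reduce the general two-agent problem to the symmetric case handled by Lemma~\ref{lemma:2same} via a cut-and-choose argument. First I would observe that for two agents every EFX test depends only on the per-item \emph{differences} $\Delta_1(a) = V_1(1,a) - V_1(2,a)$ and $\Delta_2(a) = V_2(2,a) - V_2(1,a)$. A direct expansion gives, for any complete allocation $\pi$, the identity $V_i(\pi) - V_i(\pi^{i \leftrightarrow j}) = \sum_{a \in \pi_i}\Delta_i(a) - \sum_{a \in \pi_j}\Delta_i(a)$, so agent $i$'s envy, and the effect on it of deleting any single item, are governed entirely by $\Delta_i$. In particular the externality instance is equivalent to a two-agent mixed goods/chores instance in which agent $i$'s value for item $a$ is $\Delta_i(a)$, and Definition~\ref{def:EFX} collapses to the ordinary goods-and-chores EFX condition for these difference valuations: whenever $i$ envies $j$, deleting any item whose removal reduces $i$'s $\Delta_i$-envy must eliminate it.

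Next I would instantiate Lemma~\ref{lemma:2same} on the symmetric instance in which both agents share agent $1$'s differences, e.g.\ by setting $V_1'(1,a) = V_2'(2,a) = \Delta_1(a)$ and $V_1'(2,a) = V_2'(1,a) = 0$. This profile is symmetric and has $\Delta_{12}'(a) = \Delta_1(a)$ for both agents, so Lemma~\ref{lemma:2same} returns in time $O(m \log m)$ an allocation $(B_1, B_2)$ that is EFX for \emph{both} agents of this instance. Because both agents there carry the identical difference $\Delta_1$, this says exactly that the partition $\{B_1, B_2\}$ is EFX-balanced for the valuation $\Delta_1$: an agent holding valuation $\Delta_1$ is EFX no matter which of $B_1, B_2$ she receives, when compared against the other bundle.

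Finally I would run cut-and-choose on the original instance: let agent $2$ take whichever of $B_1, B_2$ has the larger $\Delta_2$-value and give agent $1$ the remaining bundle. Agent $2$ then weakly prefers her own bundle to the swapped one, i.e.\ $\sum_{a \in \pi_2}\Delta_2(a) \geq \sum_{a \in \pi_1}\Delta_2(a)$, which by the identity above means $V_2(\pi) \geq V_2(\pi^{2 \leftrightarrow 1})$, so she is envy-free and hence trivially EFX. Agent $1$ keeps her original difference $\Delta_1$ and receives one of $B_1, B_2$, so the EFX-balanced property established above guarantees she is EFX whichever bundle she ends up with. The total running time is $O(m \log m)$ for Lemma~\ref{lemma:2same} plus $O(m)$ to evaluate agent $2$'s choice, matching the claimed bound.

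The main obstacle is the reduction layer rather than the cut-and-choose step itself, which is routine once set up. I would need to verify carefully that Definition~\ref{def:EFX}, with its condition that removing $a$ strictly shrinks the envy gap, translates faithfully into the $\Delta_i$-based statement \textquotedblleft every item whose deletion reduces $i$'s envy eliminates it,\textquotedblright\ and that \textquotedblleft EFX for both agents of the symmetric instance\textquotedblright\ is precisely the two-sided balance property the chooser step relies on. Once these translations are pinned down, agent $1$'s EFX transfers verbatim from the symmetric instance because her difference $\Delta_1$ is unchanged, and agent $2$'s guarantee is immediate from her being the chooser.
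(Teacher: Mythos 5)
Your proposal is correct and follows essentially the same route as the paper: the paper also symmetrizes via a dummy clone of agent $1$ (with valuations $V_1(1,a)$ and $V_1(2,a)$ rather than your equivalent $\Delta_1(a)$ and $0$, which induce the same differences and hence the same run of Lemma~\ref{lemma:2same}), and then lets agent $2$ choose her preferred bundle so that she is envy-free while agent $1$ inherits EFX from the symmetric instance regardless of which bundle remains. Your explicit difference-based identity just makes precise the equivalence that the paper's proof uses implicitly.
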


\begin{proof}
First create a dummy agent $1'$ of $1$. Both agent $1$ and agent $1'$ treat each other as agent $2$ and they have a symmetric valuation function such that for any item $a \in A$, we have $V_1(1, a) = V_{1'}(1', a)$ and $V_1(1', a) = V_{1'}(1, a) = V_1(2, a)$.
That is, if item $a$ is assigned to $1'$, then agent $1'$ receives the value $V_1(1, a)$ and agent $1$ receives the value $V_1(2, a)$ as if item $a$ is assigned to $2$ from the perspective of agent $1$. 

Compute an EFX allocation $\tilde{\pi}$ between agents $1$ and $1'$ via the algorithm in the proof of Lemma~\ref{lemma:2same}.
Allocation $\tilde{\pi}$ divides all items $A$ into two bundles; let agent $2$ first choose the bundle she prefers and leave the remaining bundle to agent $1$. Since agent $2$ chooses first, she does not envy agent~$1$. We showed that $\tilde{\pi}$ is EFX between $1$ and $1'$ in Lemma~\ref{lemma:2same}, so it is EFX no matter which bundle agent $1$ receives. This completes the proof of Theorem~\ref{theo:EFX:two}.
\end{proof}

We remark that constructing an EF1 allocation is easier and can be done in linear time, because we do not need to sort all items based on $|\Delta(\cdot)|$. The detailed proof is provided in \Cref{app:proof-cor-EF1-two}.

\begin{corollary}
\label{cor:EF1-two}
There always exists an EF1 allocation between two agents which can be computed in time $O(m)$.
\end{corollary}

\section{Three Agents}
\label{sec:three-agents}

In this section, we consider EF1 and EFX allocations among three agents. Note that several real-world problems involve a limited number of agents (e.g., divorce settlement and inheritance division).
We first show that, in contrast to the positive results of EFX allocations between two agents with externalities (Section~\ref{sec:two-agents}) and among three agents \emph{without} externalities \citep{CGM20a}, there may not exist an EFX allocation for three agents with externalities.
We then prove that an EF1 allocation always exists among three agents under binary values and a ``no-chore'' assumption by proposing a polynomial-time algorithm for this case.

\begin{theorem}
\label{theo:non:EFX:three}
The set of EFX allocations could be empty when there are three agents.
\end{theorem}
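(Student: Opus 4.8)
The plan is to exhibit a single explicit instance with three agents and a small set of items, together with a valuation table $V_i(j,a)$ incorporating externalities, and then argue that every complete allocation fails EFX. Since the no-externality case for three agents is settled positively, the construction must use externalities in an essential way, so I would allow negative (or mixed) externality values in the table. I expect a handful of items (likely three) to suffice, chosen so that the numerical gaps between own-values and externality-values are large enough to force envy that no single removal can undo.

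To keep the case analysis tractable, I would first build the instance to be symmetric: choose the values so that the instance is invariant under a simultaneous cyclic relabeling of the three agents together with a matching relabeling of the items. This symmetry collapses the allocation space into a few representative types, classified by the multiset of bundle sizes, e.g.\ one agent receives everything, a $2$--$1$--$0$ split, or a $1$--$1$--$1$ split, so that checking one allocation per orbit suffices.

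Then, for each representative allocation $\pi$, I would identify an ordered pair of agents $(i,j)$ with $V_i(\pi) < V_i(\pi^{i\leftrightarrow j})$, so that $i$ envies $j$, and single out an item $a$ whose removal is envy-reducing in the sense of condition~(2) of Definition~\ref{def:EFX}, namely $V_i(\lambda)-V_i(\lambda^{i\leftrightarrow j}) > V_i(\pi)-V_i(\pi^{i\leftrightarrow j})$ for $\lambda_\ell = \pi_\ell\setminus\{a\}$. The design target is to keep the residual envy strictly negative, $V_i(\lambda) < V_i(\lambda^{i\leftrightarrow j})$, which then witnesses a violation of EFX for that allocation. Because only one surviving envy relation is needed per allocation, I can tailor a single witnessing triple $(i,j,a)$ to each orbit.

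The main obstacle is twofold. First, the externality terms make the swap value $V_i(\pi^{i\leftrightarrow j})$ depend on the bundles of both $i$ and $j$ at once, so removing one item $a$ perturbs the comparison through two channels simultaneously; I must ensure that $a$ is genuinely envy-reducing yet still leaves strict envy, rather than accidentally eliminating it or failing condition~(2). Second, I must guarantee that the trap closes for \emph{every} allocation in every orbit, so that no reallocation escapes it, and this is precisely where the symmetric design and a careful calibration of the numerical gaps do the work. I would therefore conclude by tabulating, orbit by orbit, the witnessing triple $(i,j,a)$ together with the two values $V_i(\lambda)$ and $V_i(\lambda^{i\leftrightarrow j})$, confirming the strict inequality on each line and hence that no EFX allocation exists.
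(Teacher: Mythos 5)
Your proposal follows the same overall strategy as the paper---exhibit one explicit instance and kill every allocation by case analysis---but it stops at the strategy and never executes it: no valuation table is given, no orbits are enumerated, and no witnessing triples $(i,j,a)$ are checked. For an existence-of-counterexample theorem, the construction and its verification \emph{are} the proof, so as written there is a genuine gap: nothing in your text certifies that numbers with all the required properties exist, and the difficulty you yourself flag (making the trap close for \emph{every} allocation at once) is precisely the part left undone.

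Beyond incompleteness, two of your guiding instincts point away from what actually works. First, you expect about three items to suffice and that negative or mixed externalities are needed ``in an essential way.'' The paper's counterexample uses \emph{seven} items with all values nonnegative (it even satisfies the paper's no-chore assumption): six interchangeable items each valued $21$ by its owner and $16$ by the other two agents, plus one special item $g$ valued $(17,16,16)$ by agent $1$ and valued $24$ by agents $2$ and $3$ for themselves but $0$ when the other of the two holds it. The many identical items are not incidental: they guarantee that in every allocation some agent holds a bundle large enough that removing a single envy-reducing item shifts the envy by only about $5$, which is how ``the envy remains after removal'' is forced in each case. With three items, the balanced allocations have singleton bundles, and removing an item erases that item's entire contribution to the envy; a violation then requires chore-like holdings (an agent keeping an item she would rather hand to her rival) to be engineered in all $27$ allocations simultaneously---a far more delicate task that you have not shown to be feasible. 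Second, your claim that cyclic symmetry collapses allocations into orbits ``classified by the multiset of bundle sizes'' is incorrect unless the items themselves are interchangeable: the group acts jointly on agents and items, and distinct items yield strictly finer orbits. The paper obtains its case reduction differently---by making six of the seven items literally identical and keeping a single asymmetric item---which is the cleaner route if you want to rescue your plan.
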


\begin{proof}
We prove Theorem~\ref{theo:non:EFX:three} through the following counterexample. Consider three agents $N = \{1, 2, 3\}$ and seven items $A = \{a_1, a_2, a_3, a_4, a_5, a_6, g\}$. The values of items and externalities are described in Table~\ref{table:EFX}. 

\begin{table}[h]
\centering
\begin{tabular}{l|c|c}
 \hline
 & $a_k$ & $g$ 
 \\
 \hline
 1 & $21, 16, 16$ & $17, 16, 16$ \\
  \hline
 2 & $16, 21, 16$ & $16, 24, 0$ \\
  \hline
 3 & $16, 16, 21$ & $16, 0, 24$ \\
 \hline
\end{tabular}
\caption{Values for items and externalities. For each row $i \in \{1, 2, 3\}$ and each entry $(x,y,z)$ in row~$i$,  $x$, $y$, and $z$ denote the value that agent $i$ receives when the corresponding item is assigned to agent $1$, $2$, and $3$, respectively.}
\label{table:EFX}
\end{table}

\noindent
First consider the case where item $g$ is assigned to agent $1$. 
\begin{itemize}
\item If at most one item from $\{a_1, \dots, a_6\}$ is assigned to agent $1$, then either agent $2$ or $3$ receives at least three items from this set. Suppose agent $1$ receives $\{a_4, g\}$ (or just $\{g\}$) and agent $2$ receives $\{a_1, a_2, a_3\}$ (or more). Then agent $1$ has envy towards agent $2$, and the envy remains when one item $a_1$ is removed from agent $2$'s bundle.
\item If at least two items from $\{a_1, \dots, a_6\}$ are assigned to agent $1$, then either agent $2$ or $3$ receives at most two items from this set. 
Suppose agent $1$ receives $\{a_1, a_2, g\}$
(or more) and agent $2$ receives $\{a_3, a_4\}$ (or less). Then agent $2$ has envy towards agent $1$, and the envy remains when one item $a_1$ is removed from agent $1$'s bundle.
\end{itemize}
\noindent
Consider now the case where item $g$ is assigned to agent $2$ or $3$.
By symmetry, assume that it is assigned to agent $2$.
We divide into two cases.

\begin{itemize}
\item No item from $\{a_1, \dots, a_6\}$ is assigned to agent $2$:
\begin{itemize}
\item If the six items are not evenly divided between agent $1$ and $3$, say agent $1$ receives $\{a_1, a_2\}$ (or less) and agent $3$ receives $\{a_3, a_4, a_5, a_6\}$ (or more), then agent $1$ envies agent $3$ and the envy remains even if one item $a_3$ is removed from agent $3$'s bundle. A similar argument works when agent $3$ receives $\{a_1, a_2\}$ (or less) and agent $1$ receives $\{a_3, a_4, a_5, a_6\}$ (or more).
\item If the six items are evenly divided between agent $1$ and $3$, say agent $1$ receives $\{a_1, a_2, a_3\}$, then agent $2$ envies agent $1$ and the envy remains even if one item $a_1$ is removed from agent $1$'s bundle. 
\end{itemize}
\item Agent $2$ receives at least one item from $\{a_1, \dots, a_6\}$:
\begin{itemize}
\item If agent $3$ receives no more than three items from $\{a_1, \dots, a_6\}$, say agent $2$ receives $\{a_1, g\}$ (or more) and agent $3$ receives $\{a_2, a_3, a_4\}$ (or less), then agent~$3$ has envy toward agent $2$ even if one item $a_1$ is removed from agent $2$'s bundle. 
\item If agent $3$ receives more than three items from $\{a_1, \dots, a_6\}$, then agent $1$ receives at most one item from $\{a_1, \dots, a_6\}$ and agent $1$ has envy towards agent~$3$ even if one item $a_i$ is removed from the assignment of agent $3$.
\end{itemize}
\end{itemize}
For all possible cases, there does not exist an EFX allocation. This completes the proof.
\end{proof}

We next consider EF1 allocations and restrict our attention to the case where the three agents have binary valuations, i.e., each $V_i(j,a)$ is either $0$ or $1$. Note that binary valuations without externalities have previously been considered in fair division both for the desirable normative properties that they allow and for their ease of elicitation~\citep{AAGW15b,DaSc15a,BoLe15a,BKV18a,FSVX19a,HPPS20a,SuTe22a}. 

 Even under the setting of binary valuations, it still appears to be challenging to check if a given instance admits an EF1 allocation and to determine whether there always exists an EF1 allocation for any instance. To make the question tractable, we additionally impose one more assumption in Definition~\ref{assumption:no-chores}: for each item, all agents prefer to own the item rather than to have it assigned to others. Note that the example in Theorem~\ref{theo:non:EFX:three} satisfies this assumption.
 
\begin{definition}[No-Chore Assumption]
\label{assumption:no-chores}
For any item $a$ and any pair of agents $i, j$, we have $V_{i}(i,a) \geq V_{i}(j,a)$.
\end{definition}
 
 We prove that an EF1 allocation always exists in this setting and can be computed in polynomial time. Our method may be useful for further exploration of more general settings, e.g., whether there exists an EF1 allocation among multiple agents under more general preference domains.

\begin{theorem}
\label{theorem:EF1}
 For three agents under No-Chore Assumption and binary valuations, there always exists an EF1 allocation which can be computed in polynomial time.
\end{theorem}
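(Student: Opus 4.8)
The plan is to first translate EF1, in this binary and no-chore regime, into a purely combinatorial balancing condition on the allocation, and then to design a balanced allocation procedure guided by it. For an ordered pair $(i,j)$, additivity gives $V_i(\pi^{i\leftrightarrow j})-V_i(\pi)=\sum_{a\in\pi_j}\big(V_i(i,a)-V_i(j,a)\big)-\sum_{a\in\pi_i}\big(V_i(i,a)-V_i(j,a)\big)$. Under binary valuations and the No-Chore Assumption, each difference $V_i(i,a)-V_i(j,a)$ lies in $\{0,1\}$, so the only items that matter for this pair are those in $D_{ij}:=\{a:V_i(i,a)=1,\ V_i(j,a)=0\}$, and $i$ prefers to swap with $j$ exactly when $E_{ij}(\pi):=|\pi_j\cap D_{ij}|-|\pi_i\cap D_{ij}|>0$. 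Because every relevant item contributes exactly $1$, removing one item of $\pi_j\cap D_{ij}$ lowers $E_{ij}$ by one and no single removal can do better; hence $\pi$ is EF1 if and only if $E_{ij}(\pi)\le 1$ for all six ordered pairs. This reduction is the key step, and it is exactly where binariness is used.

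Next I would read off the structure of the reduced problem. Placing item $a$ with agent $r$ changes $E_{ij}$ by $-1$ when $r=i$ and $a\in D_{ij}$, by $+1$ when $r=j$ and $a\in D_{ij}$, and by $0$ when $r$ is the third agent; equivalently, giving $a$ to $r$ can only raise the envies directed towards $r$ and can only lower the envies emanating from $r$. I would then classify each item by the set of directions $\{(i,j):a\in D_{ij}\}$ in which it is relevant. \emph{One-sided} items, relevant only in directions pointing away from a single agent $i$ (lying in $D_{ij}$ and/or $D_{ik}$ but in no $D_{\ell i}$), can simply be handed to $i$: this only decreases the affected counters and never raises any, so they are harmless. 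The difficulty is concentrated in the \emph{contested} items, those lying in $D_{ij}\cap D_{ji}$ and, in the extreme, in all six sets, for which helping one side of a pair hurts the other while the neutral option of giving $a$ to the third agent may raise one of that agent's counters.

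The algorithm I would propose assigns the one-sided (and otherwise harmless) items to their desiring agent first, and then allocates each class of contested items in a balanced, round-robin fashion among the agents that contest it, using the third agent as a neutral absorber whenever a pair is about to exceed its bound; items contested by all three pairs are distributed by a three-way cyclic rule. Correctness would be argued by maintaining the invariant that, throughout, every counter $E_{ij}$ stays at most $1$, which by the reduction is precisely EF1.

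The main obstacle is exactly the coupling of the six constraints through items contested by more than one pair: a move that rebalances the pair $(i,j)$ can push a counter of a pair involving the third agent above $1$. To handle this I would avoid naive greedy insertion, which can get stuck, and instead use a potential-function local search: starting from any complete allocation, while some $E_{ij}\ge 2$ I would move a witnessing item of $\pi_j\cap D_{ij}$ to $i$ or to the third agent, chosen so that the lexicographically sorted vector of the six counters strictly decreases. Proving that such an improving move always exists when some counter is at least $2$, and that the procedure terminates after polynomially many steps, is the crux; I expect it to require a short but careful case analysis over the finitely many item signatures, with the fully contested items handled by the cyclic three-way assignment.
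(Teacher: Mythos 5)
Your opening reduction is correct and is in fact the same foundation the paper builds on: under binary valuations and the No-Chore Assumption, every difference $V_i(i,a)-V_i(j,a)$ lies in $\{0,1\}$ (the paper writes this as $\Delta_{ij}(a)$), envy of $i$ towards $j$ is governed by the counter you call $E_{ij}(\pi)$, and EF1 is equivalent to $E_{ij}(\pi)\le 1$ for all six ordered pairs. Your classification of items by signature also mirrors the paper's representation of item types as $3\times 3$ binary matrices, and your observation about one-sided items corresponds to the paper's first reduction rules. Up to this point you and the paper are on the same track.

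The genuine gap is that you stop exactly where the difficulty begins. Your entire argument for the contested items rests on the claim that, whenever some counter is at least $2$, there exists a single-item move that strictly decreases the lexicographically sorted vector of the six counters (plus a vague ``cyclic rule'' for fully contested items), and you explicitly defer this to ``a short but careful case analysis.'' That claim is precisely the theorem's content, and there is no reason to believe it is short: the paper's own proof needs six reduction rules (removing single items, same-type triples, cross-type triples, and cancelling pairs), a structural lemma that the kernel admits an EF1 allocation in which \emph{no two agents envy each other simultaneously}, and then an exhaustive computer search over the resulting $21$ kernel configurations of at most $12$ items each. Two specific issues your plan does not address: (i) no proof, or even heuristic evidence, that an improving move always exists --- a local search on the six coupled counters can in principle get stuck at a local optimum with some $E_{ij}\ge 2$, and ruling this out is the whole problem; (ii) your ``neutral absorber'' idea for items that one agent is indifferent about requires, when such set-aside items are re-inserted at the end, that the partial allocation have no mutual envy between any pair (this is why the paper proves the stronger simultaneous-no-envy property for its kernel), and your invariant $E_{ij}\le 1$ alone does not guarantee this. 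As it stands, the proposal is a plausible research plan whose crux is unresolved, not a proof.
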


We give a high-level description of our algorithm here and present a detailed proof of Theorem~\ref{theorem:EF1} in Appendix~\ref{app:three-no-chore}. 
The key idea is that given any instance, we iteratively apply some reduction rules that assign one item, one pair, or three items to some agents in an EF or EF1 manner. We show that any instance can be reduced to a certain number of cases where each case consists of a small number of items (no more than $12$). We then wrote a program to verify that for each case there always exists an EF1 allocation by exhaustive search.

For the sake of illustration, we next describe two simple reduction rules. Given an item $a \in A$, we can create a $3$-by-$3$ matrix to represent each agent $i$'s valuation function $V_i(\cdot, a)$, where the $i$th row corresponds to the values that agent $i$ receives when item $a$ is assigned to each agent.
\begin{itemize}
\item Assign an item $a$ to some agent $i$ if it does not generate envy from any agent towards $i$. For instance, if we have an item with the following matrix, then we can assign it to agent $1$ without generating envy from any other agent.
$$
\begin{bmatrix}
1 & 0 & 0 \\
1 & 1 & 0 \\
1 & 0 & 1 
\end{bmatrix}
$$
\item Suppose that item $a$ will not generate envy from agent $i$ if it is assigned to any other agent. 
Then we can leave this item aside until we cannot apply any other reduction rules and then consider
assigning item $a$ to the other two agents in an EF1 manner. 
For instance, if we have an item with the following matrix, then assigning it to either agent $2$ or agent $3$ does not generate envy from agent $1$.
$$
\begin{bmatrix}
1 & 1 & 1 \\
0 & 1 & 0 \\
0 & 0 & 1 
\end{bmatrix}
$$
\end{itemize}

Note that we have only taken an initial step towards a complete understanding of EF1 allocations under externalities.
We conjecture that an EF1 allocation always exists for three agents under binary valuations even without the No-Chore Assumption. For larger numbers of agents $n$, we may need to relax EF1 to EF$k$ where $k$ is a function of $n$.

\section{GFS and Public Decision Making}
\label{sec:public-decision-GFS}

In this section, we propose a new fairness concept based on proportionality that we call \emph{general fair share (GFS)}.
This concept works even for the ``public decision making'' setting \citep{CFS17a}, which generalizes fair division of indivisible items under externalities.
We show that there always exists an allocation satisfying general fair share up to one item (GFS1) even when the valuations can be positive or negative, and such an allocation can be computed in polynomial time via a variant of round robin. 
We also discuss how GFS1 is superior to an existing proportionality concept in public decision making.

\subsection{Public Decision Making}
\label{sec:public}
An instance $I^{P}$ of public decision making consists of a set of agents $N$ and a set of issues $A$. 
Each issue $a \in A$ is associated with a set of choices $a^T$, exactly one of which needs to be selected. For each choice $a^t \in a^T$ of issue $a$, each agent~$i$ derives a value $V_i(a^t)$, where we reuse the notation $V_i$ in a slightly different way than in fair allocation under externalities.
An allocation $\pi$ of instance $I^P$ is a set of choices for all issues; let $\pi(a)$ denote the choice made for issue $a$. The value that agent $i$ receives from allocation $\pi$ is $V_i(\pi) = \sum_{a\in A} V_i(\pi(a))$.

A fair allocation problem with additive externalities can be reduced to an equivalent public decision making problem as follows: Each item $a$ is viewed as an issue and associated with exactly $n$ choices, where each choice corresponds to an agent to whom the item could be given. 
For public decision making, the number of choices is flexible, whereas for fair allocation with externalities, the number of choices is equal to the number of agents $n$. 

\citet{CFS17a} proposed the following concept for the public decision making problem, which requires that each agent $i$ should receive at least $1/n$ of the maximum value she can get from all of the issues.\footnote{
In their paper, this concept is simply called proportionality, but we refer to it as PROP-Max to distinguish it from another variant of proportionality that we will discuss later.}
For each issue~$a$, let $V_i^{max}(a) = \max_{a^t\in a^T} V_i(a^t)$

\begin{definition}[PROP-Max]
\label{def:PROP_max_appendix}
Given an allocation $\pi$, the Proportional-Max share of agent $i$ (PROP-Max$_i$) is defined as
\[
\text{PROP-Max}_i = \frac{1}{n} \sum_{a\in A} V_i^{max}(a).
\]
An allocation $\pi$ satisfies Proportionality-Max (PROP-Max) if $V_i(\pi) \geq$ PROP-Max$_i$ holds for all $i \in N$.
\end{definition}

Conitzer et al.~also introduced an ``up to one'' relaxation of PROP-Max.

\begin{definition}[PROP-Max up to One Issue]
\label{def:PROP_max_1}
An allocation $\pi$ satisfies Proportionality-Max up to one issue (PROP-Max-1) if for all $i \in N$, there exists $a \in A$ such that
\[
V_i(\pi) - V_i(\pi(a)) + V_i^{max}(a) \geq \text{PROP-Max}_i. 
\]
\end{definition}

In other words, an allocation $\pi$ satisfies PROP-Max-1 if for each agent $i$, there exists an item $a$ such that changing the assignment of $a$ from $\pi(a)$ to agent $i$'s best assignment yielding $V_i^{max}(a)$ ensures that the value that agent $i$ receives is at least her PROP-Max$_i$.

Conitzer et al.~showed that when all valuations are positive, there always exists a PROP-Max-1 allocation.
However, our next proposition shows that a PROP-Max-1 may not exist if negative valuations are allowed.

\begin{proposition}
\label{prop:non:max1}
There may not exist a PROP-Max-1 allocation when negative valuations are allowed, even if there are only two agents.
\end{proposition}

\begin{proof}
We show this negative result for the more restricted setting of fair allocation with externalities. 

Consider $N=\{1, 2\}$ and $A= \{a_1, a_2, a_3\}$. Suppose that for distinct $i, j\in N$ and each item $a\in A$, we have $V_i(i, a) = 0$ and $V_i(j, a) = -100$. One agent (say, $1$) must receive at least two items, and the value of agent $2$ is $-200$. However, agent $2$'s maximum value is $0$, but it is not possible to attain this by reassigning one item. 
\end{proof}

\subsection{GFS and GFS1 Concepts}

The incompatibility of PROP-Max-1 with negative valuations in \Cref{prop:non:max1} motivates us to propose a new fairness concept called \emph{general fair share (GFS)}.
For agent~$i$ and issue~$a$, let $V_i^{min}(a) = \min_{a^t\in a^T} V_i(a^t)$.

\begin{definition}[General Fair Share]
\label{def:GFS}
The general fair share of agent $i$ (GFS$_i$) is defined as
\begin{align*}
\text{GFS}_i 
&=  \frac{1}{n} \sum_{a\in A}  V_i^{max}(a) + \frac{n-1}{n} \sum_{a\in A} V_i^{min}(a) \\
&= \sum_{a\in A} V_i^{min}(a) + \frac{1}{n} \sum_{a\in A} (V_i^{max}(a) - V_i^{min}(a)).
\end{align*}
An allocation $\pi$ satisfies general fair share (GFS) if $V_i(\pi) \geq  \text{GFS}_i$ for all $i \in N$.
\end{definition}

We next illustrate the intuition of GFS. Consider a GFS allocation $\pi$. For any agent $i$, the improvement that $\pi$ offers upon agent $i$'s worst allocation is at least $\frac{1}{n} \sum_{a\in A} (V_i^{max}(a) - V_i^{min}(a))$, i.e., $1/n$ of agent~$i$'s largest possible improvement---the improvement required by GFS is shown in the colored regions of Figure~\ref{fig:Prop-max-min}.
That is, if we subtract $\sum_{a\in A} V^{min}_i(a)$ from $V_i(\pi)$, then we have
\begin{align*}
V_i(\pi) - \sum_{a\in A}  V^{min}_i(a)  &\geq \text{GFS}_i - \sum_{a\in A} V^{min}_i(a) 
\\
& = \frac{1}{n} \sum_{a\in A} (V_i^{max}(a) - V_i^{min}(a)).
\end{align*}

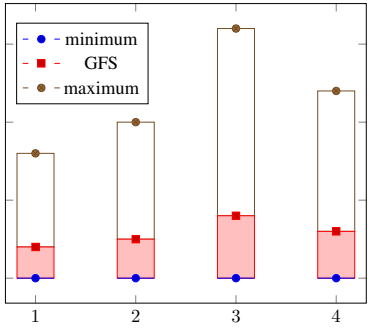
\begin{figure}[tb]
  \centering
\begin{tikzpicture}[scale=0.70]
\begin{axis}[stack plots=y,/tikz/ybar, bar width=20pt, label style={font=\large}, legend style={at={(0.03,0.8)},anchor=west},
symbolic x coords={$1$, $2$, $3$, $4$},
xtick=data, yticklabels={,,} 
]
  \addplot+[ybar] plot coordinates {($1$,0) ($2$,0) 
  ($3$, 0) ($4$,0) };
\addplot+[ybar, fill= pink] plot coordinates {($1$,0.2) ($2$,0.25) 
  ($3$, 0.4) ($4$,0.3) };
\addplot+[ybar] plot coordinates {($1$,0.6) ($2$,0.75)  ($3$, 1.2) ($4$,0.9) };
  \legend{\strut minimum, \strut GFS, \strut maximum
  }
\end{axis}
\end{tikzpicture}
\caption{Illustration of GFS with four agents. 
Bottom lines and top lines denote the minimum and the maximum values each agent can receive among all possible allocations.
Middle lines denote the GFS for each agent, and the colored region in each bar equals one-fourth of the difference between the maximum value and the minimum value of that agent.
}
\label{fig:Prop-max-min}
\end{figure}

Similarly to PROP-Max, GFS is too strong to guarantee corresponding allocations, so we relax it in the same manner as PROP-Max-1. 
We refer to this concept as \emph{general fair share up to one item (GFS1)}.

\begin{definition}[General Fair Share up to One Item]
\label{def:GFS1}
An allocation $\pi$ satisfies general fair share up to one item (GFS1) if for all $i \in N$, there exists $a \in A$ such that
\[
V_i(\pi) - V_i(\pi(a)) + V_i^{max}(a) \geq \text{GFS}_i. 
\]
\end{definition}
In other words, an allocation $\pi$ satisfies GFS1 if for each agent $i$, there exists an item $a$ such that changing the assignment of $a$ from $\pi(a)$ to agent $i$'s best assignment yielding $V_i^{max}(a)$ ensures that the value that agent $i$ receives is at least her general fair share GFS$_i$.

With positive valuations, our GFS/GFS1 concepts are stronger than PROP-Max/PROP-Max-1.

\begin{proposition}
\label{prop:GFS_Max}
For public decision making with positive valuations, GFS implies PROP-Max, and GFS1 implies PROP-Max-1.
\end{proposition}

\begin{proof}
Given a GFS allocation $\pi$, for any agent $i$ we have 
\begin{align*}
V_i(\pi) - \frac{1}{n}\sum_{a\in A} & V^{max}_i(a) \geq \frac{n-1}{n} V_i^{min}(a) \geq 0,
\end{align*}
where we use the assumption of positive valuations for the latter inequality.
Thus, GFS implies PROP-Max. The proof that GFS1 implies PROP-Max-1 is similar.
\end{proof}

We next show that a GFS1 allocation always exists.
Combined with \Cref{prop:non:max1,prop:GFS_Max}, this means that GFS1 is a more suitable concept in public decision making than PROP-Max-1, both when valuations are only positive and when negative valuations are allowed.

\subsection{Max-Min Round Robin}
In this subsection, we present a polynomial-time algorithm ``Max-Min Round Robin'' that computes a GFS1 allocation for public decision making. 

We give here a brief description of Max-Min Round Robin. For agent $i\in N$ and issue $a \in A$, let $\beta_i(a)$ $=$ $V_i^{max}(a)$ $-$ $V_i^{min}(a)$ denote the difference between the maximum and minimum value that agent $i$ can receive from issue $a$.
The Max-Min Round Robin algorithm works as follows: First, fix a round robin sequence of agents. Then, for each agent $i$'s turn, let $i$ determine the choice of some issue $a$ in her favor such that $\beta_i(a)$ is the largest among all remaining issues. Repeat this procedure until there are no issues left. The details are described in Algorithm~\ref{alg:MMRR}.
We reiterate that this algorithm works for both positive and negative valuations.

  \begin{algorithm}[h!]
\begin{algorithmic}[scale=1]
        \REQUIRE an instance of public decision making
        \ENSURE a GFS1 allocation
        \end{algorithmic}
       \begin{algorithmic}[1]
      \caption{Max-Min Round Robin}
      \label{alg:MMRR}
      \STATE Let $\beta_i(a) = V_i^{max}(a) - V_i^{min}(a), \forall i\in N, a\in A$.
      \STATE Fix a round robin sequence of agents, say $1, 2, \dots, n$.
      \STATE $\pi \leftarrow \emptyset, j \leftarrow 1$
      \WHILE{$A \neq \emptyset$}
      \STATE For agent $j$'s turn, find an issue $a$ and a choice $a^t$ such that 
      \begin{itemize}
      \item $\forall a' \in A, \beta_j(a) \geq \beta_j(a')$
      \item $V_j^{max}(a) =  V_j(a^t)$
      \end{itemize}
      \STATE $\pi \leftarrow \pi \cup \{a^t\}$ \COMMENT{Choose $a^t$ for issue $a$}
      \STATE $A \leftarrow A \setminus \{a\}$ \COMMENT{Remove $a$ from $A$}
      \STATE $j \leftarrow (j$ mod $n) + 1$ \COMMENT{Move on to the next agent}
      \ENDWHILE
      \RETURN an allocation $\pi$
       \end{algorithmic}
       \end{algorithm}

\begin{theorem}
\label{theo:max-min-RR}
Max-Min Round Robin returns a GFS1 allocation for public decision making in polynomial time.
\end{theorem}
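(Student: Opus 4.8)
The plan is to reduce the GFS1 guarantee to a clean combinatorial statement about the $\beta$-values each agent controls, and then exploit the greedy ``largest-$\beta$ first'' rule of the algorithm. Fix an agent $i$ and the returned allocation $\pi$, and write $g_i(a) = V_i(\pi(a)) - V_i^{min}(a) \in [0,\beta_i(a)]$ for the value $i$ gains on issue $a$ above its worst choice. Substituting $V_i(\pi)=\sum_a V_i^{min}(a)+\sum_a g_i(a)$, $V_i(\pi(a^*))=V_i^{min}(a^*)+g_i(a^*)$ and $V_i^{max}(a^*)=V_i^{min}(a^*)+\beta_i(a^*)$ into Definition~\ref{def:GFS1}, the $\sum_a V_i^{min}(a)$ terms cancel and the GFS1 requirement for $i$ becomes: there is an issue $a^*$ with $\sum_a g_i(a) + \beta_i(a^*) - g_i(a^*) \geq \text{GFS}_i = \frac1n\sum_a \beta_i(a)$. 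So it suffices to exhibit such an $a^*$ for every agent.

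Next I would isolate the issues that agent $i$ herself decides. Indexing the $m$ picks in the order they are made as $a_1,\dots,a_m$, the fixed round-robin order means $i$ decides exactly the picks at positions $i,\,i+n,\,i+2n,\dots$ occurring before the items run out; call this set $S_i$ and put $G:=\sum_{a\in S_i}\beta_i(a)$. Two facts drive the argument. First, when $i$ decides she takes her max, so $g_i(a)=\beta_i(a)$ on $S_i$, giving $\sum_a g_i(a)\ge G+\sum_{s<i}g_i(a_s)$. Second, by line~5 of Algorithm~\ref{alg:MMRR} she always picks the remaining issue of largest $\beta_i$, and since every later pick was still available at that moment, each of her picks has $\beta_i$-value at least that of every pick made after it. Using the second fact I would partition positions $i,i+1,\dots,m$ into consecutive length-$n$ blocks (the last possibly shorter), each starting at one of agent $i$'s picks; within a block her pick dominates all others in $\beta_i$, so the block sum is at most $n$ times her contribution. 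Summing over blocks yields $\sum_{s\ge i}\beta_i(a_s)\le nG$, hence $G \ge \text{GFS}_i - \frac1n\sum_{s<i}\beta_i(a_s)$, which settles everything from agent $i$'s first turn onward.

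The main obstacle is the prefix $a_1,\dots,a_{i-1}$ of up to $n-1$ issues decided by agents $1,\dots,i-1$ before agent $i$ ever moves: these are chosen by their own criteria, so their $\beta_i$-weight is uncontrolled and is exactly the deficit in the bound on $G$, meaning the single ``up to one item'' slack must pay for an entire prefix. I would resolve this by choosing $a^*$ to be the prefix issue of largest $\beta_i$. Then $\beta_i(a^*)=\max_{s<i}\beta_i(a_s)\ge \frac{1}{i-1}\sum_{s<i}\beta_i(a_s)\ge \frac1n\sum_{s<i}\beta_i(a_s)$, using $i-1\le n-1<n$ together with max $\ge$ average, while $\sum_{s<i}g_i(a_s)\ge g_i(a^*)$ because all gains are nonnegative and $a^*$ is one of these issues. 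Combining the two displays gives $\sum_a g_i(a)+\beta_i(a^*)-g_i(a^*)\ge G+\beta_i(a^*)\ge \text{GFS}_i$, as needed.

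Finally I would dispatch the degenerate cases, which are immediate: if $i=1$ the prefix is empty and $G\ge \text{GFS}_i$ already holds without relaxation; if agent $i$ never gets a turn ($m<i$) then $G=0$ and the same max-versus-average estimate over all $m\le n-1$ issues closes the gap. Polynomial running time is clear, as the while loop executes $m$ times and each iteration only scans the remaining issues and their choices. The only genuinely delicate point is the prefix handling in the third paragraph; the block/domination bound and the reduction in the first paragraph are routine once set up correctly.
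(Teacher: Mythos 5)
Your proof is correct, but it follows a genuinely different decomposition from the paper's. The paper sorts all issues in descending order of $\beta_i$, notes that agent $i$'s $k$-th pick is always at least as good (in $\beta_i$) as the $kn$-th issue in that sorted order, lower-bounds $V_i(\pi)$ by the worst-case benchmark $\sum_{a}V_i^{\min}(a)+\sum_{k=1}^{p}\beta_i(a_{kn})$, and then takes as its ``one item'' the \emph{first issue in the sorted order that agent $i$ did not pick}, showing the resulting modified allocation $\pi'$ satisfies full GFS via blockwise max-versus-average inequalities. You instead work in \emph{chronological pick order}: your blocks start at agent $i$'s turns, the greedy rule of line~5 of Algorithm~\ref{alg:MMRR} gives domination of each of her picks over everything picked later, and this yields $G\ge \mathrm{GFS}_i-\frac1n\sum_{s<i}\beta_i(a_s)$, so the entire deficit is localized in the prefix of at most $n-1$ issues decided before agent $i$'s first turn; your ``one item'' is the best prefix issue, and max-versus-average over that short prefix closes the gap. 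The two choices of the relaxation item do not coincide in general, and your gain-reformulation $g_i(a)=V_i(\pi(a))-V_i^{\min}(a)$, which cancels the $\sum_a V_i^{\min}(a)$ terms in Definition~\ref{def:GFS1} up front, makes the accounting somewhat cleaner than the paper's benchmark-plus-modified-allocation argument. What each buys: the paper's route directly exhibits an allocation $\pi'$, one choice away from $\pi$, satisfying full GFS (matching the spirit of the definition), while your route gives a sharper structural explanation of \emph{why} a single item suffices, namely that it must compensate only for the picks made before agent $i$ ever acts; you also handle the corner cases ($i=1$, and agents who never get a turn when $m<n$) explicitly, which the paper treats only implicitly through its zero-padding convention.
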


\begin{proof}
Given an instance of public decision making, let $\pi$ denote the outcome yielded by Algorithm~\ref{alg:MMRR}. It is clear that the algorithm runs in polynomial time, since the algorithm traverses all issues once.
We next prove that, for any agent $i$, the value agent $i$ receives from $\pi$ satisfies GFS1.

Suppose there are a total of $m = p \cdot n + q$ issues, where $n$ denotes the number of agents and $1 \le q \le n$. Then Algorithm~\ref{alg:MMRR} terminates in $p+1$ rounds. Rank all issues $a \in A$ in descending order of $\beta_i(a) = V_i^{max}(a) - V_i^{min}(a)$, and divide the issues into $p+1$ groups as follows:
\[
a_1, a_2, \dots, a_n \mid a_{n+1}, \dots, a_{2n} \mid \dots \mid a_{pn+1}, \dots, a_{pn+q}
\]
During the execution of Algorithm~\ref{alg:MMRR}, agent $i$ determines some issue from $\{a_1, \dots, a_n\}$ in her favor for her first turn, then determines some issue from $\{a_{1}, \dots, a_{2n}\}$ in her favor for her second turn, and so on.
The minimum value that agent $i$ receives from allocation $\pi$ is 
\[
\min V_i(\pi) = \sum_{a \in A'} V_i^{max}(a) + \sum_{a \in A \setminus A'} V_i^{min}(a) 
\]
where $A' = \{a_n, a_{2n}, \dots, a_{pn}\}$. In other words, agent $i$ receives the minimum value $\min V_i(\pi)$ from allocation $\pi$ when agent $i$ receives $V_i^{max}(a)$ for issues from $\{a_n$, $a_{2n}$, $\dots$, $a_{pn}\}$ and receives $V_i^{min}(a)$ for all other issues.

For allocation $\pi$, we have that 
\begin{align*}
& V_i(\pi) - \sum_{a\in A} V_i^{min}(a) 
\\
&\geq  \min V_i(\pi) - \sum_{a\in A} V_i^{min}(a)
\\
&=  \sum_{a \in A'} V_i^{max}(a) + \sum_{a \in A \setminus A'} V_i^{min}(a) - \sum_{a\in A} V_i^{min}(a)
\\
&=  \sum_{a \in A'} V_i^{max}(a) - \sum_{a \in A'} V_i^{min}(a) = \sum_{k=1}^p \beta(a_{kn}).
\end{align*}

Suppose that during the process of Algorithm~\ref{alg:MMRR}, for agent $i$'s turns, she determines issues $b_1, b_2, \dots ,b_p$ in her favor which are ranked in descending order of $\beta_i(\cdot)$. In other words, agent $i$ receives $V_i^{max}(b)$ for each issue $b$ from $b_1, b_2, \dots ,b_p$.

Consider the first issue $x$ from $a_1, a_2, \dots, a_m$ that differs from $b_1, b_2, \dots ,b_p$. 
Then we create a new allocation $\pi'$ in which only one choice differs from $\pi$: agent $i$ receives $V^{max}_i(x)$ from issue $x$ instead of $V_i(\pi(x))$, while all other choices remain the same as $\pi$. For instance, if $x = a_1$,
then agent $i$ receives $V^{max}_i(a_1)$ instead of $V_i(\pi(a_1))$ in allocation $\pi'$; if $x = a_2$, then agent $i$ has received $V^{max}_i(a_1)$ in allocation $\pi$ and now she receives $V^{max}_i(a_2)$ from issue $a_2$ instead of $V_i(\pi(a_2))$ in allocation $\pi'$, and so on.

Thus, in allocation $\pi'$, agent $i$ can determine at least two issues from $a_1$, $a_2$, $\dots$, $a_n$ in her favor where one is $a_1$ and the other one is weakly better than $a_n$. While for the worst case of allocation $\pi$, agent $i$ only determines $a_n$ from $a_1$, $a_2$, $\dots$, $a_n$ in her favor. 
Then we have
\begin{equation*}
V_i(\pi') \geq \min V_i(\pi) + \beta_i(a_1)
\end{equation*}
which implies that 
\begin{equation}
\label{equ:1}
V_i(\pi') - \sum_{a\in A} V_i^{min}(a) \geq \beta_i(a_1) + \sum_{k=1}^p \beta_i(a_{kn}).
\end{equation}

Recall that $\beta_i(a_y) \geq \beta_i(a_z)$ for any $y < z$ with $y, z \leq m$.
For issue $a_1$ and $c \in [1, n]$, we have $\beta_i(a_1) \geq \beta_i(a_c)$, which implies that 
\begin{equation}
\label{equ:2}
\beta_i(a_1) \geq \frac{1}{n} \sum_{c=1}^n \beta_i(a_c).
\end{equation}

Assume that $V_i^{max}(a_{pn+c}) = V_i^{min}(a_{pn+c}) = 0$ for $c \in [q+1, n]$.
For issue $a_{kn}$ with $1 \leq k \leq p$ and $c \in [1, n]$ we have 
$\beta_i(a_{kn}) \geq \beta_i(a_{kn+c})$, which implies that 
\begin{equation}
\label{equ:3}
\beta_i(a_{kn}) \geq \frac{1}{n} \sum_{c=1}^n \beta_i(a_{kn+c}).
\end{equation}

From inequalities (1), (2) and (3), we have 
\begin{align*}
& V_i(\pi') - \sum_{a\in A} V_i^{min}(a) 
\\
&\geq  \beta_i(a_1) + \sum_{k=1}^p \beta_i(a_{kn})
\\
&\geq  \frac{1}{n}\sum_{c=1}^n \beta_i(a_c) + \frac{1}{n}\sum_{k=1}^p \sum_{c=1}^n \beta_i(a_{kn+c})
\\
&=  \frac{1}{n}\sum_{a\in A} \beta_i(a)= \frac{1}{n} \sum_{a\in A} ( V_i^{max}(a) - V_i^{min}(a) ).
\end{align*}

Thus the new allocation $\pi'$ satisfies GFS for each agent~$i$. Since we obtain allocation $\pi'$ from $\pi$ by modifying one issue~$x$, the allocation $\pi$ satisfies GFS1, completing the proof.
\end{proof}

\section{Taxonomy of Fairness Concepts}
\label{sec:fairness}
In this section, we present a taxonomy of fairness concepts for fair allocation with externalities including existing and newly proposed ones (Figure~\ref{fig:relation}). 

Besides PROP-Max, another extension of proportionality is PROP-Ave, proposed by \citet{SSG21a}.\footnote{These authors called the notion \emph{average-share}.}
Maximin Share (MMS) is a relaxation of proportionality for fair division of indivisible items, introduced by \citet{Budi11a}. 
\citet{SSG21a} proposed Extended Maximin Share (EMMS) which generalizes MMS to the case of externalities.

For fair division without externalities, EF implies proportionality. However, we show that EF implies neither PROP-Max nor PROP-Ave when externalities exist. We propose a new notion $k$-Partial-Proportionality ($k$-P-PROP) that connects both EF and PROP-Ave. The intuition is that for any subset of agents $N' \subseteq N$ with $|N'| \leq k$, each agent $i\in N'$ should receive at least $1 / |N'|$ of the total value
she can receive from all items assigned to the group $N'$. 

Note that \citet{ABC+18a} considered a general framework called $\mathcal{H}$-HG-PROP for defining fairness concepts when allocating indivisible items in the presence of a social graph. If there are no externalities and $\mathcal{H}$ is the set of hypergraph consisting of all subsets of size at most $k$ of the agents, then $k$-P-PROP is equivalent to $\mathcal{H}$-HG-PROP. 

\begin{figure}[tb]
    \begin{center}

\begin{tikzpicture}[scale=0.9]
  \tikzstyle{onlytext}=[]

  \node[onlytext] (EF) at (0, 0) {EF};
  \node[onlytext] (EFX) at (-1.25, -1.25) {EFX};
  \node[onlytext] (EF1) at (-1.25, -2.5) {EF1};
  \node[onlytext] (EFk) at (-1.25, -3.75) {EF$k$};  

  \node[onlytext] (2-P-PROP) at (1.25, -1.25) {2-P-PROP};

    \node[onlytext] (k-P-PROP) at (2.5, 0) {$n$-P-PROP};
    \node[onlytext] (PROP-Ave) at (3.75, -1.25) {PROP-Ave}; 
    \node[onlytext] (GFS) at (3.75, -2.5) {GFS};
    \node[onlytext] (GFS-1) at (6, -2.5) {GFS1};
    \node[onlytext] (PROP-Max) at (3.75, -3.75) {PROP-Max};

    \node[onlytext] (EMMS) at (6, -1.25) {EMMS};

\draw[->, line width=1pt] (EF) -- (EFX) ;
\draw[->, line width=1pt] (EFX) -- (EF1) ;
\draw[->, line width=1pt] (EF1) -- (EFk) ;

\draw[->, line width=1pt] (EF) -- (2-P-PROP) ;

\draw[->, line width=1pt] (k-P-PROP) -- (2-P-PROP) ;
\draw[->, line width=1pt] (k-P-PROP) -- (PROP-Ave) ;
\draw[->, line width=1pt] (PROP-Ave) -- (GFS) ;
\draw[->, line width=1pt] (GFS) -- (PROP-Max) node[midway, right] {positive};
\draw[->, line width=1pt] (GFS) -- (GFS-1);
\draw[->, line width=1pt] (PROP-Ave) -- (EMMS) ;
\end{tikzpicture}
\end{center}
\caption{\label{fig:relation} Relationships among fairness concepts for fair allocation with externalities. An arrow from A to B denotes that A implies B. EF, EMMS, PROP-Ave, and PROP-Max under externalities were proposed in previous work. 
}
\end{figure}
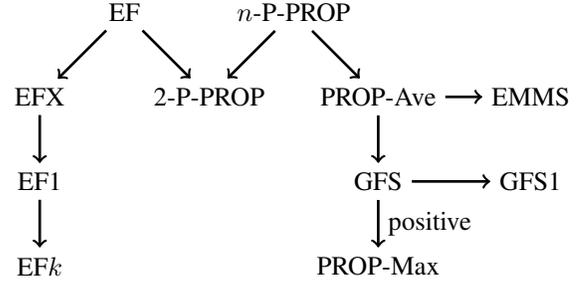

\subsection{Proportionality}

As we discussed in \Cref{sec:public-decision-GFS}, \citet{CFS17a} proposed the concept Proportionality-Max (PROP-Max) for the setting of public decision making.
We now state it in the context of fair allocation with externalities.

\begin{definition}[PROP-Max]
\label{def:PROP_max}
Given an allocation $\pi$, the Proportional-Max share of agent $i$ (PROP-Max$_i$) is
\[
\text{PROP-Max}_i = \frac{1}{n} \sum_{a\in A} V_i^{max}(a)
\]
where $V_i^{max}(a) = \max_{j \in N} V_i(j, a)$ denotes the maximum value agent $i$ can derive from item $a$.
An allocation $\pi$ satisfies Proportionality-Max (PROP-Max) if $V_i(\pi) \geq$ PROP-Max$_i$ holds for all $i \in N$.
\end{definition}

\begin{proposition}
\label{prop:EF&PROP-Max}
EF does not imply PROP-Max for two agents even when there are no externalities.
\end{proposition}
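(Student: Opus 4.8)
The plan is to exhibit a concrete two-agent instance with no externalities that admits an envy-free allocation violating PROP-Max for one agent. First I would instantiate both definitions in this regime. With no externalities we have $V_i(j,a)=0$ for every $j\neq i$, so each agent's value depends only on her own bundle, and the swap-based envy-freeness of Definition~\ref{def:SEF} collapses to classical envy-freeness: agent $i$ does not envy $j$ if and only if $\sum_{a\in\pi_i}V_i(i,a)\ge\sum_{a\in\pi_j}V_i(i,a)$.

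Next I would isolate the structural reason the implication can fail. For two agents, classical envy-freeness of $\pi$ gives $2\sum_{a\in\pi_i}V_i(i,a)\ge\sum_{a\in A}V_i(i,a)$, hence $V_i(\pi)\ge\tfrac12\sum_{a\in A}V_i(i,a)$. By contrast, under zero externalities $\text{PROP-Max}_i=\tfrac12\sum_{a\in A}\max\{V_i(i,a),0\}$ by Definition~\ref{def:PROP_max}. These two lower bounds coincide when every item is a good for $i$, but whenever some item is a chore (i.e.\ $V_i(i,a)<0$) the PROP-Max target strictly exceeds the envy-freeness guarantee. This tells me the counterexample must involve a chore, which is worth flagging: any attempt built from goods only is doomed, since EF \emph{does} imply PROP-Max in that case.

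Finally I would write down a minimal instance realizing this gap. Two items $\{a,b\}$ suffice: make $a$ a good and $b$ a chore for agent $1$ (say $V_1(1,a)=1$, $V_1(1,b)=-1$, with zero externalities), and choose agent $2$'s valuations so that the allocation giving agent $1$ both items remains envy-free for agent $2$ (for instance by letting $b$ be a chore for agent $2$ as well). For the allocation $\pi=(\{a,b\},\emptyset)$ I would verify: agent $1$ does not envy the empty bundle since $V_1(\pi)=0\ge 0$, and agent $2$ does not wish to swap into $\{a,b\}$, so $\pi$ is EF; yet $V_1(\pi)=0<\tfrac12=\text{PROP-Max}_1$.

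The main obstacle is the verification that $\pi$ is envy-free for the agent who is \emph{not} the PROP-Max victim, which forces a careful choice of the second agent's valuations so that she genuinely prefers her own bundle to $\{a,b\}$ — this is what requires $b$ to be (weakly) undesirable to her. The conceptual crux, rather than any computation, is recognizing that chores are indispensable: the entire phenomenon arises from the $\max\{\cdot,0\}$ in PROP-Max decoupling from the raw own-value sum that envy-freeness controls.
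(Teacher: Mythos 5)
Your proposal is correct and takes essentially the same route as the paper: both exhibit a minimal two-agent, two-item counterexample in which an item is a chore, so that $V_1^{max}(a)=0$ exceeds agent $1$'s own value and $\text{PROP-Max}_1$ outstrips what envy-freeness guarantees. The paper's instance is even simpler (agent $1$ values both items at $-1$, agent $2$ values both at $0$, and each agent receives one item), but the underlying phenomenon you identify --- negative values decoupling $\sum_{a\in A} V_i^{max}(a)$ from $\sum_{a\in A} V_i(i,a)$ --- is exactly the one the paper exploits.
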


\begin{proof}
Consider two items $A = \{a_1, a_2\}$ and two agents $N = \{1, 2\}$ where valuation functions are described in Table~\ref{table:PROP-Max-nega}.
The allocation in which each agent receives one item is EF, but not PROP-Max for agent~$1$.
\begin{table}[h]
\centering
\begin{tabular}{l|c|c}
 \hline
 & $a_1$ & $a_2$
 \\
 \hline
 $1$ & $-1 $ & $-1$ \\
  \hline
$2$ & $0 $ & $0$  \\
 \hline
\end{tabular}
\caption{Values for Items}
\label{table:PROP-Max-nega}
\end{table}
\end{proof}

\citet{SSG21a} proposed the following definition of Proportionality-Average (PROP-Ave).\footnote{
These authors called the property \emph{average-share}.}

\begin{definition}[PROP-Ave]
\label{def:PROP_ave}
Given an allocation $\pi$, the Proportional-Average share of agent $i$ (PROP-Ave$_i$) is
\[
\text{PROP-Ave}_i = \frac{1}{n} \sum_{a\in A} \sum_{j \in N} V_i(j, a).
\]
An allocation $\pi$ satisfies Proportionality-Ave (PROP-Ave) if $V_i(\pi) \geq$ PROP-Ave$_i$ holds for each agent $i \in N$.
\end{definition}

For $i,j\in N$ and $A'\subseteq A$, let $V_i(j,A') = \sum_{a\in A'}V_i(j,a)$.

\begin{proposition}
\label{prop:EF&PROP-Ave}
EF implies PROP-Ave for two agents.
\end{proposition}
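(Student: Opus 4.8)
The plan is to exploit the fact that with exactly two agents, performing the swap $\pi^{1\leftrightarrow 2}$ completely exchanges who receives every item: for each item $a$, the agent $\pi^{1\leftrightarrow 2}(a)$ is simply the agent other than $\pi(a)$. First I would record the per-item consequence of this. If $\pi(a)=1$ then item $a$ contributes $V_i(1,a)$ to $V_i(\pi)$ and $V_i(2,a)$ to $V_i(\pi^{1\leftrightarrow 2})$, and symmetrically if $\pi(a)=2$; in either case the two contributions together are $V_i(1,a)+V_i(2,a)$. Summing over all items and using additivity yields the identity
\[
V_i(\pi) + V_i(\pi^{1\leftrightarrow 2}) = \sum_{a\in A}\bigl(V_i(1,a)+V_i(2,a)\bigr) = 2\cdot\text{PROP-Ave}_i,
\]
where the last equality is just the definition of PROP-Ave$_i$ specialized to $n=2$.

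Next I would bring in the envy-freeness hypothesis. By Definition~\ref{def:SEF}, an EF allocation $\pi$ satisfies $V_i(\pi)\geq V_i(\pi^{i\leftrightarrow j})$ for all pairs; taking $j$ to be the other agent gives $V_i(\pi)\geq V_i(\pi^{1\leftrightarrow 2})$. Substituting this into the identity above, I get $2\,V_i(\pi) \geq V_i(\pi)+V_i(\pi^{1\leftrightarrow 2}) = 2\cdot\text{PROP-Ave}_i$, hence $V_i(\pi)\geq \text{PROP-Ave}_i$. Since this argument is agnostic to which of the two agents plays the role of $i$, it applies to both $i=1$ and $i=2$, establishing PROP-Ave.

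This proof has no genuine obstacle; it is a short averaging argument, and the only point requiring care is the bookkeeping of the swap identity, namely verifying that $V_i(\pi)+V_i(\pi^{1\leftrightarrow 2})$ telescopes item-by-item into $\sum_{a\in A}\sum_{j\in N}V_i(j,a)$. I would emphasize that the argument is special to $n=2$: for three or more agents the swap $\pi^{i\leftrightarrow j}$ exchanges only two of the bundles rather than redistributing all items to the remaining agents, so the clean identity breaks down and EF need no longer imply PROP-Ave.
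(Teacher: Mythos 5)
Your proof is correct and is essentially the paper's own argument: the paper likewise starts from the EF inequality $V_1(1,\pi_1)+V_1(2,\pi_2)\geq V_1(2,\pi_1)+V_1(1,\pi_2)$, adds $V_1(\pi)$ to both sides, and recognizes the resulting sum as $\sum_{a\in A}(V_1(1,a)+V_1(2,a))=2\cdot\text{PROP-Ave}_1$, which is exactly your swap identity packaged in bundle notation rather than item-by-item. No further comment needed.
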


\begin{proof}
Given a set of items $A$, consider any EF allocation $\pi = \pi_1 \cup \pi_2$ in which agent $1$ receives bundle $\pi_1$ and agent~$2$ receives bundle $\pi_2$. 
For agent $1$, by the definition of EF, we have $V_1(1, \pi_1) + V_1(2, \pi_2)\geq V_1(2, \pi_1) + V_1(1, \pi_2)$. 
Hence
\begin{align*}
  &  (V_1(1, \pi_1) + V_1(2, \pi_2)) + (V_1(1, \pi_1) + V_1(2, \pi_2))   
\\
&\geq   (V_1(1, \pi_1) + V_1(2, \pi_2)) + (V_1(2, \pi_1) + V_1(1, \pi_2)) 
\\
&=   \sum_{a\in A} (V_1(1, a) + V_1(2, a)).
\end{align*}
Thus allocation $\pi$ also satisfies PROP-Ave for agent~$1$. We can use the same argument for agent~$2$.
This completes the proof of Proposition~\ref{prop:EF&PROP-Ave}.
\end{proof}

\begin{proposition}
\label{prop:EF&PROP}
EF implies neither PROP-Max nor PROP-Ave for three agents even when all values and externalities are nonnegative.
\end{proposition}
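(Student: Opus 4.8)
The plan is to prove the statement by exhibiting a single explicit instance with three agents and nonnegative valuations, together with one allocation $\pi$ that is EF in the sense of Definition~\ref{def:SEF} yet violates both PROP-Max and PROP-Ave for some agent. Showing one EF allocation that fails both thresholds at once establishes both non-implications simultaneously.

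The construction I would use exploits the fact that the swap-based envy-freeness of Definition~\ref{def:SEF} constrains agent $1$ only through the two swaps $1\leftrightarrow 2$ and $1\leftrightarrow 3$. Concretely, take three items $a_1,a_2,a_3$ and the ``diagonal'' allocation $\pi$ in which agent $i$ receives $a_i$. I would let agents $2$ and $3$ have identically zero valuations, so that their envy-freeness and fairness constraints hold trivially and the whole burden falls on agent $1$. For agent $1$ I would keep all diagonal and ``adjacent'' externality values small (say equal to $1$), but set the two cross-externalities $V_1(3,a_2)$ and $V_1(2,a_3)$ --- the value agent $1$ would derive if $a_2$ went to agent $3$, or $a_3$ went to agent $2$ --- equal to a large parameter $M$. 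The key point is that neither of these two entries appears in agent $1$'s envy comparisons for the swaps $1\leftrightarrow 2$ and $1\leftrightarrow 3$, so they are free to be arbitrarily large without breaking EF; at the same time they are never realized under $\pi$, since $a_2$ goes to agent $2$ and $a_3$ to agent $3$.

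With this instance the verification has three routine steps: (i) checking EF, where agent $1$'s two swaps each leave her realized value unchanged (all the values she actually collects equal $1$), and agents $2,3$ are trivially envy-free; (ii) computing $V_1(\pi)$, which is just the small diagonal sum; and (iii) computing the thresholds $\text{PROP-Max}_1$ and $\text{PROP-Ave}_1$, both of which pick up the large value $M$ (through the per-item maxima and through the averages over all assignments, respectively) and therefore exceed $V_1(\pi)$ once $M$ is chosen large enough. A value such as $M=5$ already makes both thresholds strictly larger than $V_1(\pi)=3$, so no delicate tuning is required.

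The main obstacle --- and the reason three agents are genuinely needed --- is reconciling envy-freeness with the simultaneous failure of proportionality: Proposition~\ref{prop:EF&PROP-Ave} shows that for two agents EF already forces PROP-Ave, so any counterexample must make essential use of a third agent. The construction resolves this precisely through the cross-externalities $V_1(3,a_2)$ and $V_1(2,a_3)$, which exist only when $n\ge 3$ and which EF cannot control because the corresponding reallocation is exactly the $2\leftrightarrow 3$ swap that agent $1$ is never entitled to perform. I would close by noting that, unlike Proposition~\ref{prop:EF&PROP-Max}, this example uses only nonnegative values, which is the strengthening asserted in the statement.
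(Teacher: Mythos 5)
Your proposal is correct and rests on exactly the same key insight as the paper's proof: swap-based EF cannot control the externality values agent $1$ would derive from reallocations among agents $2$ and $3$ (effectively a $2\leftrightarrow 3$ swap she cannot perform), so those entries can be inflated to break PROP-Max and PROP-Ave while EF holds. The paper's counterexample is just more minimal --- a single item $a$ with $V_1(2,a)=1$ and all other values zero, allocated to agent $3$ --- but the mechanism is identical.
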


\begin{proof}
Consider three agents $N = \{1, 2, 3\}$ and one item $A = \{a\}$, where 
$V_i(j, a) = 0$ for any $i,j \in N$ except that $V_1(2, a) = 1$.
The allocation in which agent $3$ receives item $a$ is EF, but it is neither PROP-Max nor PROP-Ave for agent~$1$.
\end{proof}

\begin{proposition}
\label{Prop:Prop-Ave&GFS}
Prop-Ave implies GFS.
\end{proposition}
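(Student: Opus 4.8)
The plan is to show that for every agent the PROP-Ave threshold dominates the GFS threshold, so that any allocation clearing the former automatically clears the latter. Concretely, fix an agent $i$; it suffices to prove
$$\text{PROP-Ave}_i \;\geq\; \text{GFS}_i + \sum_{a \in A} V_i^{min}(a),$$
since then the hypothesis $V_i(\pi) \geq \text{PROP-Ave}_i$ immediately yields the GFS conclusion $V_i(\pi) \geq \text{GFS}_i + \sum_{a \in A} V_i^{min}(a)$.

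Both sides of this target inequality are sums over the items $a \in A$, so I would establish it item by item. For a single item $a$, write $M = V_i^{max}(a)$ and $m = V_i^{min}(a)$. The contribution of $a$ to the left-hand side is $\frac{1}{n}\sum_{j \in N} V_i(j,a)$, while its contribution to the right-hand side is $\frac{1}{n}(M - m) + m$. Multiplying through by $n$, the per-item claim reduces to
$$\sum_{j \in N} V_i(j,a) \;\geq\; M + (n-1)\,m.$$

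This inequality is immediate: among the $n$ values $\{V_i(j,a)\}_{j \in N}$, one of them attains the maximum $M$, and each of the remaining $n-1$ values is at least the minimum $m$, so the total is at least $M + (n-1)m$. Summing the per-item inequalities over all $a \in A$ recovers $\text{PROP-Ave}_i \geq \text{GFS}_i + \sum_{a \in A} V_i^{min}(a)$, which completes the argument. I do not anticipate a genuine obstacle: the reasoning is a short computation, and the only point requiring a moment's care is the item-wise decomposition together with the observation that exactly one term of each item's sum can be charged to the maximum while the remaining $n-1$ terms are bounded below by the minimum.
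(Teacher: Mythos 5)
Your proposal is correct and follows essentially the same route as the paper's proof: both reduce the claim to the per-item inequality $\sum_{j \in N} V_i(j,a) \geq V_i^{max}(a) + (n-1)\,V_i^{min}(a)$ and then rearrange algebraically to recover $\text{GFS}_i + \sum_{a \in A} V_i^{min}(a)$. The only difference is presentational: you make the item-wise decomposition and the max/min counting argument explicit, whereas the paper applies the same bound implicitly inside a chain of inequalities starting from $V_i(\pi) \geq \text{PROP-Ave}_i$.
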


\begin{proof}
Consider any PROP-Ave allocation $\pi$. For each agent~$i$, by definition of PROP-Ave, we have 
\begin{align*}
V_i(\pi) & \geq \frac{1}{n} \sum_{a\in A} \sum_{j\in N} V_i(j, a)
\\
& \geq \sum_{a\in A}\frac{V_i^{max}(a) + (n-1) V_i^{min}(a)}{n} 
\\
& = \text{GFS}_i 
\end{align*}
Thus any Prop-Ave allocation $\pi$ also satisfies GFS. 
\end{proof}

\begin{proposition}
\label{Prop:GFS&Max}
GFS implies Prop-Max if for each agent $i \in N$, $\sum_{a\in A} V_i^{min}(a) \geq 0$ holds.
\end{proposition}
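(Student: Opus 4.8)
The plan is to argue directly and agent-by-agent, starting from the GFS guarantee and massaging it into the PROP-Max guarantee. Fix an arbitrary agent $i \in N$ and let $\pi$ be any allocation satisfying GFS. First I would write down the GFS condition together with the definition of $\text{GFS}_i$, namely
\[
V_i(\pi) \geq \text{GFS}_i + \sum_{a\in A} V_i^{min}(a) = \frac{1}{n}\sum_{a\in A}\bigl(V_i^{max}(a) - V_i^{min}(a)\bigr) + \sum_{a\in A} V_i^{min}(a).
\]

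Next I would simplify the right-hand side by splitting the $\frac{1}{n}$ term and collecting the $V_i^{min}$ contributions. This yields
\[
\frac{1}{n}\sum_{a\in A} V_i^{max}(a) + \Bigl(1 - \frac{1}{n}\Bigr)\sum_{a\in A} V_i^{min}(a) = \text{PROP-Max}_i + \frac{n-1}{n}\sum_{a\in A} V_i^{min}(a),
\]
where I have used the definition $\text{PROP-Max}_i = \frac{1}{n}\sum_{a\in A} V_i^{max}(a)$. This identity is the crux of the argument: it shows that the GFS guarantee for agent $i$ exceeds $\text{PROP-Max}_i$ by exactly $\frac{n-1}{n}\sum_{a\in A} V_i^{min}(a)$.

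The final step is to invoke the hypothesis. Since $n \geq 1$ gives $\frac{n-1}{n} \geq 0$, and the assumption guarantees $\sum_{a\in A} V_i^{min}(a) \geq 0$, the extra term is nonnegative, so $V_i(\pi) \geq \text{PROP-Max}_i$. As $i$ was arbitrary, $\pi$ satisfies PROP-Max. There is no real obstacle here—the statement reduces to routine algebra—so the only thing to watch is bookkeeping: the nonnegativity hypothesis is used precisely because the leftover term carries the factor $\sum_{a\in A} V_i^{min}(a)$, and without it that term could be negative (indeed, the factor $\frac{n-1}{n}$ makes clear the implication is trivial for $n=1$ and genuinely needs the assumption for $n \geq 2$).
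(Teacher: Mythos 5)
Your proof is correct and follows essentially the same route as the paper's: expand $\mathrm{GFS}_i$, regroup to get $\frac{1}{n}\sum_{a\in A} V_i^{max}(a) + \frac{n-1}{n}\sum_{a\in A} V_i^{min}(a)$, and drop the second term using the nonnegativity hypothesis. No differences worth noting.
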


\begin{proof}
Consider any GFS allocation $\pi$. For each agent $i$, by definition of GFS we have
\begin{align*}
V_i(\pi) & \geq \text{GFS}_i
\\
& = \frac{1}{n} \sum_{a\in A} V_i^{max}(a) +  \frac{n-1}{n} \sum_{a\in A} V_i^{min}(a)
\\
& \geq \frac{1}{n} \sum_{a\in A} V_i^{max}(a).
\end{align*}
Thus any GFS allocation $\pi$ also satisfies Prop-Max if for each agent $i$, we have $\sum_{a\in A} V_i^{min}(a) \geq 0$. 
\end{proof}

\subsection{Extended Maximin Share}

Maximin Share is a relaxation of proportionality for fair division of indivisible items, introduced by \citet{Budi11a}. The general idea is that an agent $i$ is asked to divide all items into $n$ disjoint bundles and must take the bundle with her minimum value. As a result, agent $i$ needs to partition all items in a way that maximizes the value of her minimum bundle. \citet{KPW18a} showed that in general there may not exist an allocation that guarantees maximin share for all agents. 

\citet{SSG21a} generalized maximin share to the case of externalities as shown in Definition~\ref{def:EMMS}. Let $P = \left \langle P_1, P_2, \dots, P_n \right \rangle$ denote a partition of items $A$ into $n$ bundles and let $\Omega_P$ denote the set of $n!$ different permutations of $P$. For agent $i$, let $W_i(P)$ denotes the worst allocation of $P$ for agent $i$, i.e., 
\[W_i(P) = \arg \min_{\pi \in \Omega_P} V_i(\pi).\]

\begin{definition}[EMMS]
\label{def:EMMS}
The extended maximin share of agent $i$ (EMMS$_i$) is denoted by 
\[\text{EMMS}_i = \max_{P \in \mathcal{P}} V_i(W_i(P))\]
where $\mathcal{P}$ is the set of all partitions of $A$ into $n$ subsets. 
An allocation $\pi$ satisfies extended maximin share if for all $i \in N$, $V_i(\pi) \geq $ EMMS$_i$ holds.
\end{definition}

\citet{SSG21a} proved that PROP-Ave$_i$ $\geq$ EMMS$_i$ for every agent $i$.
Even though their model assumes nonnegative externalities, their proof works even when externalities can be negative.
We therefore have the following proposition.

\begin{proposition}[\citet{SSG21a}]
\label{prop:PROP-Ave&EMMS}
PROP-Ave implies EMMS. 
\end{proposition}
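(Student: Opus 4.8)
The plan is to reduce the proposition to the pointwise comparison of the two benchmark values. An allocation $\pi$ satisfies PROP-Ave precisely when $V_i(\pi) \geq \text{PROP-Ave}_i$ for every agent $i$, and it satisfies EMMS precisely when $V_i(\pi) \geq \text{EMMS}_i$ for every $i$. Hence it suffices to establish the inequality $\text{PROP-Ave}_i \geq \text{EMMS}_i$ for each agent $i$ separately; the implication then follows immediately by chaining $V_i(\pi) \geq \text{PROP-Ave}_i \geq \text{EMMS}_i$. This per-agent inequality is exactly \citet[Lemma~1]{SSG21a}, so at a minimum one may invoke that lemma directly, after observing (as the surrounding text does) that its proof nowhere uses the sign of the externalities and therefore carries over to the case of negative values.

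To prove the pointwise inequality from scratch, I would fix an arbitrary partition $P = \langle P_1, \dots, P_n \rangle \in \mathcal{P}$ and compute the average of agent $i$'s value over the $n!$ permutations in $\Omega_P$. Writing $V_i(\pi) = \sum_{a \in A} V_i(\pi(a), a)$ and grouping items by the bundle that contains them, the heart of the argument is a counting fact: for a fixed item $a$, as $\pi$ ranges over $\Omega_P$ the bundle containing $a$ is assigned to each agent $j$ in exactly $(n-1)!$ of the permutations. Consequently the average contribution of item $a$ equals $\frac{1}{n}\sum_{j \in N} V_i(j, a)$, and summing over all items gives $\frac{1}{n!}\sum_{\pi \in \Omega_P} V_i(\pi) = \frac{1}{n}\sum_{a \in A}\sum_{j \in N} V_i(j, a) = \text{PROP-Ave}_i$. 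The crucial feature is that this identity holds for \emph{every} partition $P$, always with the same right-hand side.

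Since a minimum never exceeds the corresponding average, the worst permutation of $P$ satisfies $V_i(W_i(P)) = \min_{\pi \in \Omega_P} V_i(\pi) \leq \frac{1}{n!}\sum_{\pi \in \Omega_P} V_i(\pi) = \text{PROP-Ave}_i$. Taking the maximum over all partitions yields $\text{EMMS}_i = \max_{P \in \mathcal{P}} V_i(W_i(P)) \leq \text{PROP-Ave}_i$, which is the desired pointwise inequality and hence the proposition.

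I do not anticipate a genuine obstacle: this is a standard averaging bound, and because it manipulates only raw sums of the values $V_i(j,a)$ without ever invoking nonnegativity, it transfers verbatim to the setting with negative externalities — which is precisely the extension the paper needs. The one step that requires care is the combinatorial claim that each bundle goes to each agent in $(n-1)!$ permutations; this is where the whole argument turns, and it is what makes the average collapse cleanly onto $\text{PROP-Ave}_i$ independently of the chosen partition $P$.
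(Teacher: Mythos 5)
Your proposal is correct, and it does strictly more than the paper does: the paper's entire ``proof'' of this proposition is the citation you mention in your first paragraph --- it invokes Lemma~1 of \citet{SSG21a}, which gives the pointwise bound $\text{PROP-Ave}_i \geq \text{EMMS}_i$ for every agent $i$, and merely \emph{asserts} that the cited proof survives negative externalities. Your permutation-averaging argument (each bundle of a partition $P$ goes to each agent in exactly $(n-1)!$ of the $n!$ permutations, so the average of $V_i$ over $\Omega_P$ equals $\text{PROP-Ave}_i$ for \emph{every} $P$; the minimum is at most the average; take the maximum over $P$) supplies the derivation the paper outsources, makes the result self-contained, and verifies rather than asserts the sign-independence claim. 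Incidentally, the paper's source contains a commented-out proof attempt that takes a different route: a contradiction argument using only the $n$ swap allocations $\pi^{i \leftrightarrow j}$, which rests on the identity $\sum_{j\in N} V_i(\pi^{i \leftrightarrow j}) = \sum_{a\in A}\sum_{j\in N} V_i(j,a)$. That identity is false in general --- an item $a$ lying in a bundle $\pi_k$ with $k \neq i$ contributes $V_i(i,a) + (n-1)V_i(k,a)$ to the left-hand side, not $\sum_{j\in N} V_i(j,a)$ --- which is presumably why that proof was excised. Averaging over all $n!$ permutations, as you do, is exactly what makes the counting collapse correctly onto $\text{PROP-Ave}_i$, so your identification of that combinatorial step as the crux is accurate, and your route is the sound one.
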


\subsection{Partial Proportionality}
For fair division of indivisible items without externalities, EF implies proportionality. However, as we have seen, EF implies neither PROP-Max nor PROP-Ave when externalities exist. We propose a new definition that connects both EF and PROP-Ave. The intuition is that for any subset of agents $N' \subseteq N$ with $|N'| \leq k$ (for some given $k$), each agent $i\in N'$ should receive at least $1/|N'|$ of the total value she can receive from all the items assigned to the group $N'$.

\begin{definition}[$k$-P-PROP]
\label{def:k-P-PROP}
Given an allocation $\pi$, for a set of agents $N' \subseteq N$ with $|N'| = k$, 
the Partial-Proportional share of agent $i \in N$ with respect to the group $N'$ (P-PROP$_i^{N'}$) is
\[
\text{P-PROP}_i^{N'} = \frac{1}{|N'|} V_i^{sum}(\pi_{N'})
\]
where $V_i^{sum}(\pi_{N'}) = \sum_{a \in \pi_{N'}}
\sum_{j \in N'} V_i(j, a)$ denotes the sum of values agent $i$ derives from the items belonging to agents from the group $N'$ in $\pi$ when these items are assigned to other agents from $N'$.

An allocation $\pi$ is k-Partial-Proportional (k-P-PROP) if for any agent $i$, for any set of agents $N' \subseteq N$ with $|N'| \leq k$ and $i\in N'$, we have $V_i(\pi) \geq$ PROP$_i^{N'}$.
\end{definition}

It follows directly from the definitions that $n$-P-PROP implies PROP-Ave and 2-P-PROP. Next, we show a relation between 2-P-PROP and EF.

\begin{proposition}
\label{prop:EF&2-P-PROP}
EF implies 2-P-PROP. 
\end{proposition}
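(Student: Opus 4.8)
The plan is to establish the required inequality separately for each agent $i$ and each admissible group $N'$ with $i\in N'$ and $|N'|\le 2$, reusing the ``add the inequality to itself'' device from the proof of Proposition~\ref{prop:EF&PROP-Ave}. The singleton case $N'=\{i\}$ is immediate: here $\pi_{N'}=\pi_i$ and $\text{P-PROP}_i^{\{i\}}=\sum_{a\in\pi_i}V_i(i,a)$, which is exactly the value $i$ derives from the group's own items, so the bound holds with equality. It therefore suffices to treat a pair $N'=\{i,j\}$.

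For a pair $\{i,j\}$ I would start from the EF guarantee that $i$ does not want to swap with $j$, namely $V_i(\pi)\ge V_i(\pi^{i\leftrightarrow j})$. The key step is to decompose each side according to three blocks of items: those in $\pi_i$, those in $\pi_j$, and those held by agents outside $\{i,j\}$. Since the swap $\pi\mapsto\pi^{i\leftrightarrow j}$ leaves every other agent's bundle untouched, the third block contributes the same amount to both sides and cancels, leaving the purely pairwise inequality
\[
\sum_{a\in\pi_i}V_i(i,a)+\sum_{a\in\pi_j}V_i(j,a)\;\ge\;\sum_{a\in\pi_j}V_i(i,a)+\sum_{a\in\pi_i}V_i(j,a),
\]
which only involves the items of $\pi_{\{i,j\}}=\pi_i\cup\pi_j$.

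To finish, I would add the left-hand side to both sides. Using that $\pi_i$ and $\pi_j$ are disjoint, the right-hand side collapses to the swap-symmetric sum
\[
\sum_{a\in\pi_i\cup\pi_j}\bigl(V_i(i,a)+V_i(j,a)\bigr)=V_i^{sum}(\pi_{\{i,j\}})=2\,\text{P-PROP}_i^{\{i,j\}},
\]
so dividing by two yields that the value agent $i$ obtains from the items held by the group $\{i,j\}$ is at least $\text{P-PROP}_i^{\{i,j\}}$, as required. Note that when $N'=N$ this restricted value is precisely $V_i(\pi)$, recovering the two-agent PROP-Ave statement of Proposition~\ref{prop:EF&PROP-Ave}.

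The step I expect to be the crux is the cancellation of the external block: it is precisely because the items held by third parties are unaffected by the $i\leftrightarrow j$ swap that the EF condition degenerates into a statement about $\pi_i\cup\pi_j$ alone, matching the index set of $V_i^{sum}(\pi_{N'})$. The only real bookkeeping is to track which items carry $V_i(i,\cdot)$ versus $V_i(j,\cdot)$ before and after the swap and to confirm that, summed over $\pi_i\cup\pi_j$, they reproduce exactly $V_i^{sum}$; everything else is the same one-line doubling argument as in the two-agent case.
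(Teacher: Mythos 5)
Your proof is correct and follows essentially the same route as the paper's: both reduce EF to the pairwise inequality $\sum_{a\in\pi_i}V_i(i,a)+\sum_{a\in\pi_j}V_i(j,a)\ \geq\ \sum_{a\in\pi_i}V_i(j,a)+\sum_{a\in\pi_j}V_i(i,a)$ and then apply the same doubling device (add the left-hand side to both sides, recognize $V_i^{sum}(\pi_{\{i,j\}})$, divide by two). The only differences are matters of explicitness, not of substance: you spell out the cancellation of the items held by agents outside $\{i,j\}$ and the trivial singleton case $N'=\{i\}$, both of which the paper leaves implicit.
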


\begin{proof}
Consider any EF allocation $\pi$ in which each agent~$i$ receives bundle $\pi_i$. By the definition of EF, for any two agents $i$ and $j$, we have 
\[
V_i(i, \pi_i) + V_i(j, \pi_j)\geq V_i(j, \pi_i) + V_i(i, \pi_j).
\] 
We add $V_i(i, \pi_i) + V_i(j, \pi_j)$ to both sides:
\begin{align*}
  (V_i(i, &\pi_i) + V_i(j, \pi_j)) + (V_i(i, \pi_i) + V_i(j, \pi_j))
\\
&\geq  (V_i(i, \pi_i) + V_i(j, \pi_j)) + (V_i(j, \pi_i) + V_i(i, \pi_j)) 
\\
&=  V_i(i, \pi_i \cup \pi_j) + V_i(j, \pi_i \cup \pi_j)
\\
&=  \sum_{a\in \pi_i \cup \pi_j} (V_i(i, a) + V_i(j, a)).
\end{align*}
It follows that
\[
V_i(i, \pi_i) + V_i(j, \pi_j) \geq \frac{1}{2} \sum_{a\in \pi_i \cup \pi_j} (V_i(i, a) + V_i(j, a))
\]
which means that $\pi$ satisfies 2-P-PROP.
\end{proof}

\section{Conclusion}
In this paper, we proposed several fairness concepts for fair division of indivisible items under externalities including EF1, EFX and GFS. We presented efficient algorithms for finding the corresponding fair allocations. 
An important open question that remains from our work is whether there always exists an EF1 allocation among three or more agents.
Note that a positive answer to this question would generalize the corresponding result of \citet{ACIW19a} for goods and chores without externalities.
On the other hand, if the answer is negative, it would be reasonable to ask for the optimal relaxation EF$k$ that can be attained.
Finally, it will be interesting to consider more general valuation functions that are not necessarily additive. 
While the existence of EF1 and EFX allocations is known beyond additive valuations in certain settings \citep{LMMS04a,PlRo20a}, it remains to be seen whether these guarantees can be extended to incorporate externalities.

\section*{Acknowledgments}

This work was partially supported by ARC Laureate Project FL200100204 on ``Trustworthy AI'', by the Singapore Ministry of Education under grant
number MOE-T2EP20221-0001, and by an NUS Start-up Grant.
We thank the anonymous reviewers of AAAI 2023 for their comments.

\bibliography{aaai23}

\appendix

\section{Proof of \Cref{cor:EF1-two}}
\label{app:proof-cor-EF1-two}

In this section, we prove that there always exists an EF1 allocation between two agents which can be computed in linear time. Similar to the proof of \Cref{theo:EFX:two}, we first prove a simpler case between two agents with symmetric valuations in Lemma~\ref{lemma:2same1} and then complete the proof for \Cref{cor:EF1-two}.

\begin{lemma}
\label{lemma:2same1}
An EF1 allocation always exists for two agents with symmetric valuations and it can be computed in time $O(m)$.
\end{lemma}

\begin{proof}
Consider a set of items $A$ and two agents $N = \{1, 2\}$ who have symmetric additive valuations. 
Create an allocation $\tilde{\pi}$ between agent $1$ and $2$ as follows. We iteratively and greedily allocate each item. There are two possible bundles for the item, leading to two different allocations. Between these two allocations, we choose one that an agent with the smaller current total value weakly prefers. Break ties arbitrarily.

We next prove that allocation $\tilde{\pi}$ is EF1 for both agents. 
Suppose we allocate all items in the order of $a_1, a_2, \ldots, a_m$.
For the base case, assigning item $a_1$ to either agent is EF1. 
For the induction, assume that a partial allocation of items $a_1$, $\ldots$ , $a_{k}$ is EF1. Since the agents have symmetric valuations, at most one agent can be envious. Without loss of generality, suppose the algorithm allocates $a_{k+1}$ to agent $1$ who has at most the same value as agent~$2$. Then agent $1$'s envy towards agent $2$ weakly decreases and the allocation is still EF1 for agent $1$. If agent $2$ becomes envious, then removing the new item $a_{k+1}$ will eliminate the envy. Thus the allocation remains EF1 for both agent $1$ and $2$. 

We can assign all items in linear time and thus we can compute an EF1 allocation between two agents with symmetric valuations in linear time. This completes the proof of Lemma~\ref{lemma:2same1}.
\end{proof}

\begin{corollary}
\label{coro:EF1:two}
There always exists an EF1 allocation between two agents which can be computed in time $O(m)$.
\end{corollary}

\begin{proof}
First create a dummy agent $1'$ of $1$. Both agent $1$ and $1'$ treat each other as agent $2$ and they have a symmetric valuation function such that for any item $a \in A$, we have $V_1(1, a) = V_{1'}(1', a)$ and $V_1(1', a) = V_{1'}(1, a) = V_1(2, a)$.
That is, if item $a$ is assigned to $1'$, then agent $1'$ receives the value $V_1(1, a)$ and agent $1$ receives the value $V_1(2, a)$ as if item $a$ is assigned to $2$ from the perspective of agent $1$. 

Compute an EF1 allocation $\tilde{\pi}$ between agents $1$ and $1'$ via the algorithm in the proof of Lemma~\ref{lemma:2same1}. 
Allocation $\tilde{\pi}$ divides all items $A$ into two bundles; let agent $2$ first choose the bundle she prefers and leave the remaining bundle to agent $1$. Since agent $2$ chooses first, she does not envy agent~$1$. We showed that $\tilde{\pi}$ is EF1 between $1$ and $1'$ in Lemma~\ref{lemma:2same1}, so it is EF1 no matter which bundle agent~$1$ receives. This completes the proof of Corollary~\ref{coro:EF1:two}.
\end{proof}

\section{EF1 for Three Agents under Binary Valuations and No-Chore Assumption}
\label{app:three-no-chore}

In this section, we give a detailed description of our algorithm that computes an EF1 allocation among three agents under the no-chore assumption and binary valuations. 

\subsection{Difference of Valuations between Two Agents}
We first introduce a new notation to simplify the description of the algorithm. 
Given an item $a$, let $\Delta_{ij}(a) = V_i(i, a) - V_i(j, a)$ denote the difference between the value agent $i$ receives when item $a$ is assigned to agent $i$ and the value agent $i$ receives when item $a$ is assigned to agent $j$.
Since valuations are binary and satisfy the no-chore assumption, we have $\Delta_{ij}(a)\in\{0,1\}$ for all $i,j\in N$ and $a\in A$.
We extend the notation to sets of items as follows:
\[
\Delta_{ij}(A') = \sum_{a \in A'} \Delta_{ij}(a).
\]
An allocation $\pi$ is EF if for each pair $i, j\in N$, we have 
\[\Delta_{ij}(\pi_i) - \Delta_{ij}(\pi_j)\geq 0.\]
An allocation $\pi$ is EF1 if for each pair $i, j\in N$, if $\Delta_{ij}(\pi_i) - \Delta_{ij}(\pi_j) < 0$, then there exists $a\in A$ such that
\[\Delta_{ij}(\pi_i \setminus \{a\}) - \Delta_{ij}(\pi_j \setminus \{a\})\geq 0.\]
We write $\Delta_{ij}(\pi) = \Delta_{ij}(\pi_i) - \Delta_{ij}(\pi_j)$.

\begin{definition}[Items of the Same Type]
We say that two items $a$, $b$ $\in$ $A$ belong to the same type if for any pair $i, j \in N$, we have $\Delta_{ij}(a) = \Delta_{ij}(b)$. 
\end{definition}

\subsection{Representation of Valuation Functions}

Given an item $a \in A$, we can create a matrix to represent the agents' valuation functions as follows.
\[
\begin{matrix}
c^{11} & c^{12} & \cdots & c^{1i} & \cdots & c^{1n} \\
c^{21} & c^{22} & \cdots & c^{2i} & \cdots & c^{2n} \\
\vdots & \vdots & \ddots & \vdots & \ddots & \vdots \\
c^{n1} & c^{n2} & \cdots & c^{ni} & \cdots & c^{nn} 
\end{matrix}
\]
\begin{itemize}
  \item Each entry $c^{ij}$ represents the value agent $i$ receives when item $a$ is assigned to agent $j$, i.e., $c^{ij} = V_i(j, a)$.
\item The $i$th row corresponds to the values agent $i$ receives when item $a$ is assigned to each agent $j \in N$.
\item The $j$th column corresponds to the values each agent $i \in N$ receives when item $a$ is assigned to agent $j$.
\end{itemize}

Note that two items belong to the same type if they correspond to the same matrix.

\subsection{Characterization of EF1 Allocations}

In this section, we characterize some features of EF1 allocations based on $\Delta_{ij}$. 
We will design an efficient algorithm based on the following lemmas.

\begin{lemma}
\label{lemma:no-more-than-2}
Given an instance $I = (N, A, V)$, suppose there exists a subset of items $A^t$ of the same type $t$ whose size is divisible by $n$.
Instance $I$ admits an EF1 allocation if the reduced instance $I' = (N, A', V)$ with $A' = A \setminus A^t$ admits an EF1 allocation. 
\end{lemma}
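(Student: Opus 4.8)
The plan is to start from an EF1 allocation $\pi'$ of the reduced instance $I'$ and extend it to an allocation $\pi$ of $I$ by distributing the items of type $t$ evenly among the $n$ agents. Since $|A^t|$ is divisible by $n$, each agent can be given exactly $|A^t|/n$ items of type $t$, so that $\pi_\ell = \pi'_\ell \cup S_\ell$ where each $S_\ell$ consists of $|A^t|/n$ items of type $t$ and $\bigcup_{\ell \in N} S_\ell = A^t$. It remains to verify that $\pi$ is EF1 for $I$.

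The key observation I would exploit is that items of the same type contribute identically to the pairwise measure $\Delta_{ij}$. Concretely, fix a pair $i,j \in N$. By definition of type $t$, every item $a \in A^t$ satisfies $\Delta_{ij}(a) = \delta$ for a common value $\delta \in \{0,1\}$ (depending only on the pair $i,j$ and the type). Because each of $\pi_i$ and $\pi_j$ receives the \emph{same} number $|A^t|/n$ of type-$t$ items, I would compute
\[
\Delta_{ij}(\pi) = \Delta_{ij}(\pi_i) - \Delta_{ij}(\pi_j) = \big(\Delta_{ij}(\pi'_i) + \tfrac{|A^t|}{n}\,\delta\big) - \big(\Delta_{ij}(\pi'_j) + \tfrac{|A^t|}{n}\,\delta\big) = \Delta_{ij}(\pi').
\]
Thus the added type-$t$ items cancel out, and the net envy $\Delta_{ij}$ of every ordered pair is exactly the same in $\pi$ as in $\pi'$.

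With this invariance in hand, checking EF1 is immediate. For any pair $i,j$, if $\Delta_{ij}(\pi) \geq 0$ then $i$ does not envy $j$ and nothing needs to be checked. If $\Delta_{ij}(\pi) < 0$, then $\Delta_{ij}(\pi') < 0$ as well, so the EF1 property of $\pi'$ supplies an item $a \in A'$ with $\Delta_{ij}(\pi'_i \setminus \{a\}) - \Delta_{ij}(\pi'_j \setminus \{a\}) \geq 0$. Since $a \in A'$ is not of type $t$, removing $a$ leaves the even split of $A^t$ intact, so the identical quantity $\tfrac{|A^t|}{n}\,\delta$ is again added to both bundles; the same computation as above shows $\Delta_{ij}(\pi_i \setminus \{a\}) - \Delta_{ij}(\pi_j \setminus \{a\}) \geq 0$. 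Hence the same witness item $a$ certifies EF1 for $\pi$, and $\pi$ is an EF1 allocation of $I$.

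I do not anticipate a serious obstacle, since the entire argument reduces to the cancellation above; the only points requiring minor care are that the even split is possible (this is exactly where divisibility of $|A^t|$ by $n$ enters) and that the inherited EF1 witness lies in $A'$, so that it is genuinely undisturbed by the added type-$t$ items.
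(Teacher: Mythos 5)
Your proposal is correct and follows essentially the same route as the paper's proof: extend the reduced EF1 allocation by giving each agent exactly $|A^t|/n$ items of type $t$, and observe that these contribute equally to both sides of every pairwise envy comparison, so all quantities $\Delta_{ij}(\pi_i)-\Delta_{ij}(\pi_j)$ (and hence the EF1 witnesses from $A'$) are unchanged. The paper states this cancellation in one line ("assigning each agent $k$ items of type $t$ does not incur envy between any two agents"), while you spell it out explicitly via the $\Delta_{ij}$ computation, which is a faithful elaboration rather than a different argument.
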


\begin{proof}
Suppose that $|A^t| = k\cdot n$.
Note that assigning each agent $k$ items of type $t$ does not incur envy between any two agents. If the reduced instance admits an EF1 allocation $\pi'$, then the original instance $I$ also admits an EF1 allocation $\pi$ which is obtained from $\pi'$ by additionally assigning each agent $k$ items of type $t$.
\end{proof}

\begin{lemma}
\label{Lemma:no-envy-at-all}
Given an instance $I = (N, A, V)$, suppose there exists an item $a \in A$ such that for some agent $i$, $\Delta_{ji}(a) = 0$ holds for all $j \in N$. 
Instance $I$ admits an EF1 allocation if the reduced instance $I' = (N, A', V)$ with $A' = A \setminus \{a\}$ admits an EF1 allocation.
\end{lemma}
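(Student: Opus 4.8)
The plan is to take an EF1 allocation $\pi'$ of the reduced instance $I'$ and promote it to an allocation $\pi$ of $I$ in the most naive way possible: simply hand the extra item $a$ to agent $i$, setting $\pi_i = \pi'_i \cup \{a\}$ and $\pi_k = \pi'_k$ for every $k \neq i$. Note that $a \notin A'$, so this is a well-defined complete allocation of $A$. I would then verify the EF1 condition $\Delta_{kl}(\pi_k\setminus\{b\}) - \Delta_{kl}(\pi_l\setminus\{b\}) \geq 0$ for each ordered pair $(k,l)$, splitting into three groups according to the role of agent $i$, and in each group transferring the EF1 witness from $\pi'$.

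First, for a pair $(k,l)$ with $k,l \neq i$, both bundles are identical in $\pi$ and $\pi'$ (the item $a$ touches neither), so $\Delta_{kl}(\pi) = \Delta_{kl}(\pi')$ and any witness item for this pair in $\pi'$ works verbatim in $\pi$. Second, for the envy of agent $i$ toward some $l \neq i$, the No-Chore Assumption gives $\Delta_{il}(a) = V_i(i,a) - V_i(l,a) \geq 0$ (indeed $\Delta_{il}(a)\in\{0,1\}$), so adding $a$ to $i$'s bundle yields $\Delta_{il}(\pi) = \Delta_{il}(\pi') + \Delta_{il}(a) \geq \Delta_{il}(\pi')$; the extra item only helps agent $i$, so removing the same witness that eliminated $i$'s envy toward $l$ in $\pi'$ still eliminates it in $\pi$.

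The crux, and the reason the hypothesis is phrased as it is, lies in the remaining group: the envy of an agent $k \neq i$ toward agent $i$, who is precisely the one receiving the extra item, so this is the only place where envy could worsen. Here I would unpack the hypothesis, observing that $\Delta_{ki}(a) = V_k(k,a) - V_k(i,a) = 0$ says exactly that agent $k$ is indifferent between keeping $a$ and letting $i$ have it. Consequently $\Delta_{ki}(\pi) = \Delta_{ki}(\pi'_k) - \Delta_{ki}(\pi'_i) - \Delta_{ki}(a) = \Delta_{ki}(\pi')$, and more generally removing any item $b$ from the pair $(k,i)$ has the identical effect in $\pi$ as in $\pi'$ because $a$ contributes $0$ to $\Delta_{ki}$; thus the EF1 witness for $(k,i)$ in $\pi'$ transfers directly to $\pi$. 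Since these three groups exhaust all ordered pairs, $\pi$ is EF1 for $I$. I expect this last group to be the only delicate point, the first two being immediate, and the whole reduction to hinge on recognizing that $\Delta_{ji}(a)=0$ for all $j$ is an indifference condition neutralizing the sole way in which assigning $a$ to $i$ could create new envy.
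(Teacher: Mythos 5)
Your proposal is correct and is essentially the paper's own argument: the paper likewise extends an EF1 allocation of $I'$ to one of $I$ by assigning item $a$ to agent $i$ and observing that this creates no new envy toward $i$. Your write-up is in fact more careful than the paper's one-line proof, since you explicitly invoke the no-chore assumption (in force throughout this section of the paper) to rule out the only other danger, namely that receiving $a$ could worsen agent $i$'s own envy toward others --- a point the paper leaves implicit.
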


\begin{proof}
Assigning item $a$ to agent $i$ does not incur envy towards $i$ from any agent $j \in N$. If the reduced instance $I'$ admits an EF1 allocation $\pi'$, then the original instance $I$ also admits an EF1 allocation $\pi$ which is obtained from $\pi'$ by additionally assigning  item $a$ to agent $i$. 
\end{proof}

\begin{lemma}
\label{Lemma:no-envy-from-one}
Given an instance $I = (N, A, V)$ among three agents under the no-chore assumption and binary valuations,
and assume there exists an agent $i$ $\in$ $N$ such that for each item $a\in A$, $\Delta_{ij}(a) = 0$ holds for all $j$ $\in$ $N$. Then instance $I$ admits an EF1 allocation. 
\end{lemma}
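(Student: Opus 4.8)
The plan is to exploit the fact that the hypothesis makes agent $i$ completely neutral. Since $\Delta_{ij}(a) = V_i(i,a) - V_i(j,a) = 0$ for every item $a$ and every $j \in N$, we have $\Delta_{ij}(\pi) = \Delta_{ij}(\pi_i) - \Delta_{ij}(\pi_j) = 0$ for any allocation $\pi$, so agent $i$ never envies anyone no matter how the items are distributed. Hence the only envy we must control is between the other two agents, which I will call $j$ and $k$, together with the possibility that $j$ or $k$ envies $i$.

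First I would assign agent $i$ the empty bundle, $\pi_i = \emptyset$, and distribute all items among $j$ and $k$. With $\pi_i = \emptyset$ we get $\Delta_{ji}(\pi) = \Delta_{ji}(\pi_j) = \sum_{a \in \pi_j} (V_j(j,a) - V_j(i,a)) \geq 0$ by the No-Chore Assumption, so agent $j$ does not envy $i$; the symmetric computation shows $k$ does not envy $i$ either. Thus neither active agent envies the neutral agent, and the neutral agent envies no one.

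It then remains only to allocate $A$ between $j$ and $k$ so that they are mutually EF1. This is exactly a two-agent instance: agent $j$'s value for $a$ is $V_j(j,a)$ if $a$ goes to $j$ and $V_j(k,a)$ if it goes to $k$, and symmetrically for $k$; moreover the swap defining envy between $j$ and $k$ in the three-agent instance coincides with the two-agent swap, because $\pi_i = \emptyset$ stays fixed under $\pi^{j \leftrightarrow k}$. By Corollary~\ref{coro:EF1:two} an EF1 allocation between two agents always exists, yielding a partition $(\pi_j, \pi_k)$ of $A$ that is EF1 for both $j$ and $k$. Combining this with $\pi_i = \emptyset$ gives a complete allocation that is EF1 for every ordered pair, which proves the claim.

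The one step most likely to hide an error, and the place where the No-Chore Assumption is genuinely used, is the verification that handing nothing to $i$ creates no fresh envy toward $i$; without no-chore an agent could strictly prefer to give its bundle away, breaking this. Everything else reduces to the already-established two-agent guarantee, so no new combinatorial case analysis is required.
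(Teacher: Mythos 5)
Your proof is correct and takes essentially the same route as the paper's: give agent $i$ nothing, allocate all items between the other two agents, and invoke the two-agent EF1 guarantee (Corollary~\ref{coro:EF1:two}). The only difference is that you explicitly verify, via the No-Chore Assumption, that the two active agents do not envy the empty-handed agent $i$ and that the pairwise swap condition reduces exactly to the two-agent instance---steps the paper's proof leaves implicit in its claim of an ``equivalent problem.''
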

\begin{proof}
For the instance $I$, we have one agent $i$ such that assigning any item $a\in A$ to the other two agents does not incur envy from agent $i$. Thus we can consider an equivalent problem of allocating all items between the other two agents only. Since there always exists an EF1 allocation between two agents, this completes the proof.
\end{proof}

\begin{lemma}
\label{lemma:merge}
Given an instance $I = (N, A, V)$ among three agents under the no-chore assumption and binary valuations, 
let $\pi$ denote an EF1 allocation in which for each pair $i$, $j$ $\in$ $N$, $i$ and $j$ do not envy each other simultaneously, that is, if $\Delta_{ij}(\pi)$ $=$ $-1$ holds, then we have $\Delta_{ji}(\pi)$ $\geq$ $0$.

Consider a new item $a \notin A$ for which there exists an agent $i$ $\in$ $N$ such that $\Delta_{ij}(a) = 0$ holds for all $j$ $\in$ $N$. Then the new instance $I' = (N, A \cup \{a\}, V)$ also admits an EF1 allocation.
\end{lemma}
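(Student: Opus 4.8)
The plan is to first reduce the EF1 condition to a single numerical inequality, and then settle the placement of the new item by a short case split driven by the no-simultaneous-envy hypothesis.

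First I would record a clean characterization of EF1 in this restricted setting. Since valuations are binary and satisfy the no-chore assumption, $\Delta_{ij}(b)\in\{0,1\}$ for every item $b$, so $\Delta_{ij}(\pi_i)\geq 0$ always holds. I claim that $\pi$ is EF1 if and only if $\Delta_{ij}(\pi)\geq -1$ for every ordered pair $(i,j)$. Indeed, if $\Delta_{ij}(\pi)\leq -2$ then removing a single item can raise $\Delta_{ij}$ by at most $1$ (only an item $b\in\pi_j$ with $\Delta_{ij}(b)=1$ helps), so the envy cannot be undone; conversely, if $\Delta_{ij}(\pi)=-1$ then $\Delta_{ij}(\pi_j)=\Delta_{ij}(\pi_i)+1\geq 1$, so $\pi_j$ contains an item $b$ with $\Delta_{ij}(b)=1$ whose removal makes the difference nonnegative. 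This reduces the whole problem to keeping all six directional differences at least $-1$ after adding $a$.

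Next I would track how these differences move when $a$ is placed. Relabel so that agent $1$ is the indifferent one, i.e. $\Delta_{1j}(a)=0$ for all $j$, and let $2,3$ be the others. Assigning $a$ to agent $x$ changes $\Delta_{pq}(\pi)$ by $+\Delta_{pq}(a)$ if $x=p$, by $-\Delta_{pq}(a)$ if $x=q$, and leaves it unchanged otherwise. The indifference of agent $1$ immediately gives that $\Delta_{12}(\pi)$ and $\Delta_{13}(\pi)$ are unaffected regardless of where $a$ goes, so agent $1$'s outgoing envy stays $\geq -1$ for free. The remaining bookkeeping shows that assigning $a$ to agent $2$ can only \emph{decrease} $\Delta_{32}(\pi)$ (the other affected differences $\Delta_{21}$ and $\Delta_{23}$ weakly increase), and symmetrically that assigning $a$ to agent $3$ can only decrease $\Delta_{23}(\pi)$.

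The heart of the argument is then a two-way case split. Giving $a$ to agent $2$ breaks EF1 only if $\Delta_{32}(\pi)=-1$ and $\Delta_{32}(a)=1$, pushing $\Delta_{32}$ to $-2$; giving $a$ to agent $3$ breaks EF1 only if $\Delta_{23}(\pi)=-1$ and $\Delta_{23}(a)=1$. If both placements failed we would have $\Delta_{23}(\pi)=\Delta_{32}(\pi)=-1$, i.e. agents $2$ and $3$ envying each other simultaneously, contradicting the hypothesis on $\pi$ applied to the pair $\{2,3\}$. Hence at least one of ``give $a$ to $2$'' and ``give $a$ to $3$'' keeps every directional difference at least $-1$, yielding an EF1 allocation of $A\cup\{a\}$. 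I expect the main obstacle to be the first step, namely proving the $\Delta_{ij}(\pi)\geq -1$ characterization and verifying that only one difference can deteriorate under each placement; once these are in hand, invoking no-simultaneous-envy for the single pair of non-indifferent agents completes the proof.
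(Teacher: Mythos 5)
Your proof is correct and takes essentially the same approach as the paper's: relabel so agent $1$ is indifferent, assign $a$ to one of agents $2$ or $3$, and use the no-simultaneous-envy hypothesis on the pair $\{2,3\}$ to conclude that at least one of the two placements preserves EF1. Your explicit characterization of EF1 as $\Delta_{ij}(\pi)\geq -1$ for all ordered pairs, and the bookkeeping showing only $\Delta_{32}$ (resp.\ $\Delta_{23}$) can deteriorate, simply make rigorous what the paper's shorter argument leaves implicit.
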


\begin{proof}
For item $a$, there exists an agent, say agent $1$, such that $\Delta_{1j}(a) = 0$ holds for all $j$ $\in$ $N$. Then we can assign item $a$ to either agent $2$ or $3$ without incurring envy from agent $1$. 
If agent $2$ and $3$ do not envy each other in the allocation $\pi$, then assigning $a$ to either $2$ or $3$ results in an EF1 allocation. 
Suppose agent $2$ envies $3$ in the allocation $\pi$. Since $\pi$ is an EF1 allocation in which any two agents do not envy each other simultaneously, we can assign $a$ to agent $2$, and agent $3$ will not envy agent $2$ by more than one item. Thus the new allocation is still EF1.
\end{proof}

We will show in Proposition~\ref{prop:envy_simultaneously} that there always exists an EF1 allocation in which any two agents do not envy each other simultaneously after applying our reduction rules.

\subsection{Reduction Rules that Remove One Item Each Time}
\label{sec:reduction}

Next we design three reduction rules that remove one item each time in an EF or EF1 way. 

The first rule captures the idea that if assigning some item $a$ to agent $i$ does not incur envy from any agent, then it is safe to assign item $a$ to agent $i$.

\begin{reduction}
\label{reduction:column-1}
Remove any item $a$ with a corresponding matrix $M$ such that there exists a column of $1$, i.e., $M_{1j} = M_{2j} = M_{3j} = 1$.
\end{reduction}
\begin{proof}
Since we have $\Delta_{ij}(a) = 0$ for all $i \in \{1, 2, 3\}$, assigning item $a$ to agent $j$ does not incur envy from any other agent by Lemma~\ref{Lemma:no-envy-at-all}.
\end{proof}
After applying Reduction Rule~\ref{reduction:column-1}, for each type of items, each column in 
its corresponding matrix contains at least one 0.

The next two rules capture the idea that for items $a$ such that assigning $a$ to any agent would not incur envy from agent~$i$, but assigning $a$ to either of the two remaining agents would incur envy from the other agent, then we can divide such items equally between the latter two agents, possibly leaving one item if the number of such items is odd.
We can allocate the remainder item between the latter two agents in an EF1 way at the end. 

\begin{reduction}
\label{reduction:diagonal-0}
Remove any item $a$ with a corresponding matrix ${M}$ such that $M_{11}$, $M_{22}$ or $M_{33}$ equals 0.
\end{reduction}

\begin{proof}
By the no-chore assumption, if $M_{11}$, $M_{22}$ or $M_{33}$ equals 0, then we have $\Delta_{ij}(a) = 0$ for some $i \in \{1, 2, 3\}$ and all $j \in \{1, 2, 3\}$. Thus assigning item $a$ between the other two agents will not incur envy from agent $i$. 
We can group $a$ with all items that do not incur envy from agent $i$ no matter who it is allocated to, but incurs envy from the remaining two agents when it is allocated to the other agent in the pair.
If there are at least two such items, allocating one each to the latter two agents does not incur any envy.
Hence, we will be left with at most one such item.
By Lemma~\ref{lemma:merge}, we can allocate this item at the end after applying all other reduction rules.
\end{proof}

After applying Reduction Rule~\ref{reduction:diagonal-0}, for each type of items, the diagonal of 
its corresponding matrix consists of $1$ only.

\begin{reduction}
\label{reduction:row-1}
Remove any item $a$ with a corresponding matrix $M$ such that there exists a row $i$ with $M_{i1} = M_{i2} = M_{i3} = 1$.
\end{reduction}

\begin{proof}
For item $a$, we have $\Delta_{ij}(a) = 0$ for some agent $i \in \{1, 2, 3\}$ and all $j \in \{1, 2, 3\}$. Assigning $a$ to the other two agents will not incur envy from agent $i$. We can remove this item from the instance now, and allocate item $a$ between the other two agents in an EF1 way later, as discussed in the proof of Reduction Rule~\ref{reduction:diagonal-0}.
\end{proof}

After applying Reduction Rule~\ref{reduction:row-1}, for each type of items, each row in 
its corresponding matrix contains at least one 0.

\subsection{18 Cases Left after Applying Three Reduction Rules that Remove One Item Each Time}

After applying the three reduction rules that remove one item each time, there are 18 different types of items left. We can classify all remaining types into different categories based on the number of rows with two zeros. Let $x_{B}$ denote the set of matrices having two zeros in the rows $B\subseteq\{1,2,3\}$. 
For simplicity, we write, e.g., $x_{123}$ instead of $x_{\{1,2,3\}}$.
The superscripts are used to distinguish between matrices of the same category.

\begin{itemize}
\item $x_{123}$: all three rows have two zeros each
\end{itemize}

$$
\begin{bmatrix}
1 & 0 & 0 \\
0 & 1 & 0 \\
0 & 0 & 1 
\end{bmatrix}
x_{123}^0
$$
\begin{itemize}
\item $x_{12}, x_{13}, x_{23}$: two rows have two zeros each
\end{itemize}
\begin{align*}
&
\begin{bmatrix}
1 & 0 & 0 \\
0 & 1 & 0 \\
0 & 1 & 1
\end{bmatrix}
x_{12}^0
\quad
\begin{bmatrix}
1 & 0 & 0 \\
0 & 1 & 0 \\
1 & 0 & 1
\end{bmatrix}
x_{12}^1
\\
&
\begin{bmatrix}
1 & 0 & 0 \\
0 & 1 & 1 \\
0 & 0 & 1
\end{bmatrix}
x_{13}^0
\quad
\begin{bmatrix}
1 & 0 & 0 \\
1 & 1 & 0 \\
0 & 0 & 1
\end{bmatrix}
x_{13}^1
\\
&
\begin{bmatrix}
1 & 0 & 1 \\
0 & 1 & 0 \\
0 & 0 & 1
\end{bmatrix}
x_{23}^0
\quad
\begin{bmatrix}
1 & 1 & 0 \\
0 & 1 & 0 \\
0 & 0 & 1
\end{bmatrix}
x_{23}^1
\end{align*}

\begin{itemize}
\item $x_{1}, x_{2}, x_{3}$: one row has two zeros
\end{itemize}
\begin{align*}
&
\begin{bmatrix}
1 & 0 & 0 \\
0 & 1 & 1 \\
0 & 1 & 1
\end{bmatrix}
x_{1}^0
\quad
\begin{bmatrix}
1 & 0 & 0 \\
0 & 1 & 1 \\
1 & 0 & 1
\end{bmatrix}
x_{1}^1
\quad
\begin{bmatrix}
1 & 0 & 0 \\
1 & 1 & 0 \\
0 & 1 & 1
\end{bmatrix}
x_{1}^2
\\
&
\begin{bmatrix}
1 & 0 & 1 \\
0 & 1 & 0 \\
0 & 1 & 1
\end{bmatrix}
x_{2}^0
\quad
\begin{bmatrix}
1 & 0 & 1 \\
0 & 1 & 0 \\
1 & 0 & 1
\end{bmatrix}
x_{2}^1
\quad
\begin{bmatrix}
1 & 1 & 0 \\
0 & 1 & 0 \\
1 & 0 & 1
\end{bmatrix}
x_{2}^2
\\
&
\begin{bmatrix}
1 & 0 & 1 \\
1 & 1 & 0 \\
0 & 0 & 1
\end{bmatrix}
x_{3}^0
\quad
\begin{bmatrix}
1 & 1 & 0 \\
0 & 1 & 1 \\
0 & 0 & 1
\end{bmatrix}
x_{3}^1
\quad
\begin{bmatrix}
1 & 1 & 0 \\
1 & 1 & 0 \\
0 & 0 & 1
\end{bmatrix}
x_{3}^2
\end{align*}
\begin{itemize}
\item $x_0$: no row has two zeros
\end{itemize}

$$
\begin{bmatrix}
1 & 0 & 1 \\
1 & 1 & 0 \\
0 & 1 & 1
\end{bmatrix}
x_{0}^0
\quad
\begin{bmatrix}
1 & 1 & 0 \\
0 & 1 & 1 \\
1 & 0 & 1
\end{bmatrix}
x_{0}^1
$$

\subsection{Reduction Rules That Remove A Pair of / Three Items Each Time}

Next, we introduce three more reduction rules that remove a pair of or three items each time in an EF manner. 

\begin{reduction}
\label{reduction:no-more-than-2}
For any type of items $A^t$, remove $3k$ items with $k = \left\lfloor \frac{|A^t|}{3}\right\rfloor$.
\end{reduction}

\begin{proof}
By Lemma~\ref{lemma:no-more-than-2}, we can assign each agent $k$ items of the same type without incurring envy. 
\end{proof}

After applying Reduction Rule~\ref{reduction:no-more-than-2}, the number of items of each type does not exceed $2$.

\begin{reduction}
\label{reduction:tuple}
If there are three items $a$, $b$ and $c$ with matrices as follows:
\begin{align*}
&
a:
\begin{bmatrix}
1 & 0 & 0 \\
X & 1 & X \\
X & X & 1
\end{bmatrix}
\quad
b:
\begin{bmatrix}
1 & X & X \\
0 & 1 & 0 \\
X & X & 1
\end{bmatrix}
\quad
c:
\begin{bmatrix}
1 & X & X \\
X & 1 & X \\
0 & 0 & 1
\end{bmatrix}
\end{align*}
where $X$ can be either $0$ or $1$. Then assigning $a$ to agent $1$, assigning $b$ to agent $2$ and assigning $c$ to agent $3$ will not incur envy.
\end{reduction}

\begin{proof}
It is easy to verify that this assignment will not incur envy from any agent.
\end{proof}

In other words, if we have three items of category $x_1$, $x_2$ and $x_3$ respectively, it is safe to remove them. Note that, for example, $x_{12}$ can be considered as either $x_1$ or $x_2$.

\begin{reduction}
\label{reduction:row-sum-1}
Remove two items $a$ and $b$ if assigning $a$ to agent $i$ and assigning $b$ to agent $j$
does not incur envy from any agent.
\end{reduction}
\begin{proof}
It follows directly from the statement.
\end{proof}

\begin{example}[Example of Reduction Rule~\ref{reduction:row-sum-1}]
Consider the following two items. If we assign item $a$ to agent $1$ and assign item $b$ to agent $2$, then it will not incur envy from any agent.
\begin{align*}
&
a:
\begin{bmatrix}
1 & 0 & 0 \\
0 & 1 & 0 \\
1 & 0 & 1
\end{bmatrix}
\quad
b:
\begin{bmatrix}
1 & 0 & 0 \\
0 & 1 & 0 \\
0 & 1 & 1
\end{bmatrix}
\end{align*}
\end{example}

\subsection{More on Reduction Rule~\ref{reduction:row-sum-1}}
After applying Reduction Rule~\ref{reduction:row-sum-1}, several matrices cannot coexist with each other. Here is a list of incompatible matrices which will be used later.

Within $x_{12}, x_{13}$:
\begin{align*}
&
\begin{bmatrix}
1 & 0 & 0 \\
0 & 1 & 0 \\
0 & 1 & 1
\end{bmatrix}
(x_{12}^0)
\Leftrightarrow
\begin{bmatrix}
1 & 0 & 0 \\
0 & 1 & 0 \\
1 & 0 & 1
\end{bmatrix}
(x_{12}^1)
\end{align*}
\begin{align*}
\begin{bmatrix}
1 & 0 & 0 \\
0 & 1 & 1 \\
0 & 0 & 1
\end{bmatrix}
(x_{13}^0)
\Leftrightarrow
\begin{bmatrix}
1 & 0 & 0 \\
1 & 1 & 0 \\
0 & 0 & 1
\end{bmatrix}
(x_{13}^1)
\end{align*}

Within $x_1$:
\begin{align*}
&
\begin{bmatrix}
1 & 0 & 0 \\
0 & 1 & 1 \\
1 & 0 & 1 
\end{bmatrix}
(x^{1}_{1})
\Leftrightarrow
\begin{bmatrix}
1 & 0 & 0 \\
0 & 1 & 1 \\
0 & 1 & 1
\end{bmatrix}
(x^{0}_{1})
\Leftrightarrow
\begin{bmatrix}
1 & 0 & 0 \\
1 & 1 & 0 \\
0 & 1 & 1
\end{bmatrix}
(x^{2}_{1})
\end{align*}

Within $x_2$:
\begin{align*}
\begin{bmatrix}
1 & 1 & 0 \\
0 & 1 & 0 \\
1 & 0 & 1
\end{bmatrix}
(x^{2}_{2})
\Leftrightarrow
\begin{bmatrix}
1 & 0 & 1 \\
0 & 1 & 0 \\
1 & 0 & 1
\end{bmatrix}
(x^{1}_{2})
\Leftrightarrow
\begin{bmatrix}
1 & 0 & 1 \\
0 & 1 & 0 \\
0 & 1 & 1
\end{bmatrix}
(x^{0}_{2})
\end{align*}

Between $x_1$ and $x_2$:
\begin{align*}
&
\begin{bmatrix}
1 & 0 & 0 \\
0 & 1 & 1 \\
1 & 0 & 1
\end{bmatrix}
(x^{1}_{1})
\Leftrightarrow
\begin{bmatrix}
1 & 0 & 1 \\
0 & 1 & 0 \\
0 & 1 & 1
\end{bmatrix}
(x^{0}_{2})
\end{align*}
\begin{align*}
\begin{bmatrix}
1 & 0 & 0 \\
0 & 1 & 1 \\
0 & 1 & 1
\end{bmatrix}
(x^{0}_{1})
\Leftrightarrow
\begin{bmatrix}
1 & 0 & 1 \\
0 & 1 & 0 \\
1 & 0 & 1
\end{bmatrix}
(x^{1}_{2})
\end{align*}

Between $x_{12}$ and $x_{1}$:
\begin{align*}
&
\begin{bmatrix}
1 & 0 & 0 \\
0 & 1 & 0 \\
1 & 0 & 1
\end{bmatrix}
(x^{1}_{12})
\Leftrightarrow
\begin{bmatrix}
1 & 0 & 0 \\
0 & 1 & 1 \\
0 & 1 & 1
\end{bmatrix}
(x^{0}_{1})
\end{align*}
\begin{align*}
\begin{bmatrix}
1 & 0 & 0 \\
0 & 1 & 0 \\
0 & 1 & 1
\end{bmatrix}
(x^{0}_{12})
\Leftrightarrow
\begin{bmatrix}
1 & 0 & 0 \\
0 & 1 & 1 \\
1 & 0 & 1
\end{bmatrix}
(x^{1}_{1})
\end{align*}

Between $x_{12}$ and $x_{2}$:
\begin{align*}
&
\begin{bmatrix}
1 & 0 & 0 \\
0 & 1 & 0 \\
1 & 0 & 1
\end{bmatrix}
(x^{1}_{12})
\Leftrightarrow
\begin{bmatrix}
1 & 0 & 1 \\
0 & 1 & 0 \\
0 & 1 & 1
\end{bmatrix}
(x^{0}_{2})
\end{align*}
\begin{align*}
\begin{bmatrix}
1 & 0 & 0 \\
0 & 1 & 0 \\
0 & 1 & 1
\end{bmatrix}
(x^{0}_{12})
\Leftrightarrow
\begin{bmatrix}
1 & 0 & 1 \\
0 & 1 & 0 \\
1 & 0 & 1
\end{bmatrix}
(x^{1}_{2})
\end{align*}

Between $x_{13}$ and $x_{1}$:
\begin{align*}
&
\begin{bmatrix}
1 & 0 & 0 \\
0 & 1 & 1 \\
0 & 0 & 1
\end{bmatrix}
(x^{0}_{13})
\Leftrightarrow
\begin{bmatrix}
1 & 0 & 0 \\
1 & 1 & 0 \\
0 & 1 & 1
\end{bmatrix}
(x^{2}_{1})
\end{align*}
\begin{align*}
\begin{bmatrix}
1 & 0 & 0 \\
1 & 1 & 0 \\
0 & 0 & 1
\end{bmatrix}
(x^{1}_{13})
\Leftrightarrow
\begin{bmatrix}
1 & 0 & 0 \\
0 & 1 & 1 \\
0 & 1 & 1
\end{bmatrix}
(x^{0}_{1})
\end{align*}

\subsection{Kernel}

After applying all these six reduction rules, the remaining types of items constitute the kernel of our problem. There are only a limited number of combinations of types that need to be considered and some cases can be handled by symmetry. Note that for each case, there are no more than 6 different types (i.e., at most 12 items) left.

By Reduction Rule~\ref{reduction:tuple}, at least one of $x_1$, $x_2$, and $x_3$ must be used up.
We consider the following cases.

\begin{itemize}
\item Only items of category $x_3$ are used up, while items of category $x_1$ and $x_2$ remain. In this case, $x_{13}$, $x_{23}$ and $x_{123}$ cannot exist due to Reduction Rule~\ref{reduction:tuple}.
  \begin{itemize}
  \item $|x_1| \geq |x_2| > 0$, $|x_{12}| \geq 0$, $|x_{0}| \geq 0$
  \end{itemize}
  There are four combinations of types due to Reduction Rule~\ref{reduction:row-sum-1}.
\item Items of category $x_2$ and $x_3$ are used up, while items of category $x_1$ remain.
Using Reduction Rule~\ref{reduction:tuple}, we can restrict our attention to the following cases (categories that do not appear have no items):
  \begin{itemize}
    \item $|x_1| > 0$, $|x_{12}| \geq 0$, $|x_{0}| \geq 0$ (covered by the previous case, as we will also check when $|x_2| = 0$)
    \item $|x_1| > 0$, $|x_{13}| \geq 0$, $|x_{0}| \geq 0$
    \item $|x_1| > 0$, $0 \leq |x_{23}| \leq 1$, $|x_{0}| \geq 0$
    \item $|x_1| > 0$, $0 \leq |x_{123}| \leq 1$, $|x_{0}| \geq 0$
  \end{itemize}
  There are three combinations of types for $x_1$ and $x_{13}$, four combinations for $x_1$ and $x_{23}$, and two combinations for $x_1$ and $x_{123}$  due to Reduction Rule~\ref{reduction:row-sum-1}.
  \item Items of category $x_1$, $x_2$ and $x_3$ are used up, while items of category $x_0$ remain:
  \begin{itemize}
    \item $|x_0| > 0$, $|x_{12}| + |x_{13}| + |x_{23}| + |x_{123}| \leq 2$
  \end{itemize}
  Since two types of $x_{12}$ / $x_{13}$ / $x_{23}$ cannot coexist with each other, 
  there are eight combinations of types and we do not write them down explicitly.
\end{itemize}

\begin{proposition}
\label{prop:envy_simultaneously}
For the problem of fair division of indivisible items among three agents under the no-chore assumption and binary valuations, if Reduction Rules 1--6 cannot be applied, then there exists an EF1 allocation in which no pair of agents envy each other simultaneously.
\end{proposition}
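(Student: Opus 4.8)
The plan is to prove the proposition by a case analysis over the kernel, after first recasting both EF1 and the ``no simultaneous envy'' requirement in terms of the quantities $\Delta_{ij}(\pi) = \Delta_{ij}(\pi_i) - \Delta_{ij}(\pi_j)$. Since $\Delta_{ij}(a) \in \{0,1\}$ for every item and every ordered pair, both $\Delta_{ij}(\pi_i)$ and $\Delta_{ij}(\pi_j)$ are nonnegative, and I would first show that an allocation is EF1 if and only if $\Delta_{ij}(\pi) \geq -1$ for every ordered pair $(i,j)$: when $\Delta_{ij}(\pi) = -1$ we have $\Delta_{ij}(\pi_j) \geq 1$, so removing an item $a \in \pi_j$ with $\Delta_{ij}(a) = 1$ restores $\Delta_{ij} \geq 0$, whereas $\Delta_{ij}(\pi) \leq -2$ cannot be repaired by a single removal. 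Under this reformulation the goal becomes: find $\pi$ with $\Delta_{ij}(\pi) \geq -1$ for all ordered pairs such that, for every unordered pair $\{i,j\}$, at least one of $\Delta_{ij}(\pi)$, $\Delta_{ji}(\pi)$ is nonnegative.

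Next I would invoke the kernel enumeration established just above. Once Reduction Rules 1--6 are exhausted, every surviving item belongs to one of the eighteen listed types, each type has at most two items (Reduction Rule~\ref{reduction:no-more-than-2}), at least one of the categories $x_1, x_2, x_3$ is empty (Reduction Rule~\ref{reduction:tuple}), and the incompatibility lists derived from Reduction Rule~\ref{reduction:row-sum-1} forbid many pairs of types from coexisting. This yields the three top-level cases already laid out, and within each the admissible categories and their multiplicities are pinned down to a handful of combinations, so that every case contains at most six distinct types, hence at most twelve items.

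For each case I would then exhibit a concrete allocation and verify the two inequalities directly, using the $\Delta$-profile of each type as a guide. The delicate cases are the cyclic ones: the two surviving $x_0$ types are opposite $3$-cycles, with $x_0^0$ giving $\Delta_{12} = \Delta_{23} = \Delta_{31} = 1$ and all other $\Delta$'s zero, and $x_0^1$ the reverse orientation. Here naively concentrating copies on one agent creates a $\Delta_{ij}(\pi) \leq -2$ violation, so one must spread the items so that each directed envy they induce lands in a single direction; a symmetric placement that puts the copies of a given $x_0$ type on distinct agents keeps every $\Delta_{ij}(\pi) \geq -1$ while making each pairwise envy one-directional. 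The residual items from $x_{12}, x_{13}, x_{23}, x_{123}$ (at most two, and never an incompatible pair) are then absorbed without creating mutual envy. Cyclic and reflective symmetry among the three agents collapses most subcases to a single representative.

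The step I expect to be the main obstacle is not establishing EF1 --- which the exhaustive search behind Theorem~\ref{theorem:EF1} already certifies for every kernel case --- but securing the strictly stronger no-simultaneous-envy property. A generic EF1 allocation may contain a pair with $\Delta_{ij}(\pi) = \Delta_{ji}(\pi) = -1$, so the real content is to show that each case admits an EF1 allocation in which envy along every pair points in at most one direction, and to argue (case by case, or by restricting the same bounded enumeration over at most $3^{12}$ assignments) that this strengthened set of allocations is never empty.
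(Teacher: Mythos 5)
Your proposal follows essentially the same route as the paper's own proof: recast EF1 under binary, no-chore valuations as the condition $\Delta_{ij}(\pi)\geq -1$ for all ordered pairs, invoke the kernel enumeration (at most 21 type combinations, each with at most 6 types and 12 items), and then verify that every case admits an EF1 allocation in which no pair of agents envy each other simultaneously. The paper performs exactly this final finite verification by an exhaustive-search program over the kernel cases, which is the bounded enumeration you defer to, so the two arguments coincide in substance.
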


\begin{proof}
We wrote a program to prove Proposition~\ref{prop:envy_simultaneously} by exhaustive search. There are 21 possible combinations of types which are listed in the end of this paper (8 cases for $x_0$ are omitted). There are no more than 12 items for each combination, thus we can verify that there always exists an EF1 allocation in which no pair of agents envy each other simultaneously. See the code appendix for more details.
\end{proof}

\subsection{Proof of Theorem~\ref{theorem:EF1}}

\begin{proof}
Given an instance, we first classify all items based on their corresponding types and calculate the number of items of each type. Then we can apply the six reduction rules in polynomial time. By Proposition~\ref{prop:envy_simultaneously}, we know the kernel always admits an EF1 allocation $\pi$ in which no pair of agents envy each other simultaneously. 

We set aside some items during the process of Reduction Rule~\ref{reduction:diagonal-0} and~\ref{reduction:row-1}. 
From the proofs of these reduction rules, these items can be allocated in such a way that the resulting allocation is EF1.
\end{proof}

\subsection{$x_1$, $x_2$ and $x_{12}$ I}

\begin{itemize}
\item $|x_{12}| \geq 0$ 
\end{itemize}

\begin{align*}
&
\begin{bmatrix}
1 & 0 & 0 \\
0 & 1 & 0 \\
1 & 0 & 1
\end{bmatrix}
x_{12}^1
\end{align*}

\begin{itemize}
\item $|x_{1}|> 0$
\end{itemize}

\begin{align*}
&
\begin{bmatrix}
1 & 0 & 0 \\
0 & 1 & 1 \\
1 & 0 & 1
\end{bmatrix}
x_{1}^1
\quad
\begin{bmatrix}
1 & 0 & 0 \\
1 & 1 & 0 \\
0 & 1 & 1
\end{bmatrix}
x_{1}^2
\end{align*}

\begin{itemize}
\item $|x_{2}| > 0$
\end{itemize}

\begin{align*}
&
\begin{bmatrix}
1 & 1 & 0 \\
0 & 1 & 0 \\
1 & 0 & 1
\end{bmatrix}
x_{2}^2
\end{align*}

\begin{itemize}
\item $|x_0| \geq 0$ 
\end{itemize}

$$
\begin{bmatrix}
1 & 0 & 1 \\
1 & 1 & 0 \\
0 & 1 & 1
\end{bmatrix}
\quad
\begin{bmatrix}
1 & 1 & 0 \\
0 & 1 & 1 \\
1 & 0 & 1
\end{bmatrix}
$$

\subsection{$x_1$, $x_2$ and $x_{12}$ II}

\begin{itemize}
\item $|x_{12}| \geq 0$ 
\end{itemize}

\begin{align*}
&
\begin{bmatrix}
1 & 0 & 0 \\
0 & 1 & 0 \\
1 & 0 & 1
\end{bmatrix}
x_{12}^1
\end{align*}

\begin{itemize}
\item $|x_{1}|> 0$
\end{itemize}

\begin{align*}
&
\begin{bmatrix}
1 & 0 & 0 \\
0 & 1 & 1 \\
1 & 0 & 1
\end{bmatrix}
x_{1}^1
\quad
\begin{bmatrix}
1 & 0 & 0 \\
1 & 1 & 0 \\
0 & 1 & 1
\end{bmatrix}
x_{1}^2
\end{align*}

\begin{itemize}
\item $|x_{2}| > 0$
\end{itemize}

\begin{align*}
&
\begin{bmatrix}
1 & 0 & 1 \\
0 & 1 & 0 \\
1 & 0 & 1
\end{bmatrix}
x_{2}^1
\end{align*}

\begin{itemize}
\item $|x_0| \geq 0$ 
\end{itemize}

$$
\begin{bmatrix}
1 & 0 & 1 \\
1 & 1 & 0 \\
0 & 1 & 1
\end{bmatrix}
\quad
\begin{bmatrix}
1 & 1 & 0 \\
0 & 1 & 1 \\
1 & 0 & 1
\end{bmatrix}
$$

\subsection{$x_1$, $x_2$ and $x_{12}$ III}

\begin{itemize}
\item $|x_{12}| \geq 0$ 
\end{itemize}

\begin{align*}
&
\begin{bmatrix}
1 & 0 & 0 \\
0 & 1 & 0 \\
0 & 1 & 1
\end{bmatrix}
x_{12}^0
\end{align*}

\begin{itemize}
\item $|x_{1}|> 0$
\end{itemize}

\begin{align*}
&
\begin{bmatrix}
1 & 0 & 0 \\
0 & 1 & 1 \\
0 & 1 & 1
\end{bmatrix}
x_{1}^0
\end{align*}

\begin{itemize}
\item $|x_{2}| > 0$
\end{itemize}

\begin{align*}
&
\begin{bmatrix}
1 & 0 & 1 \\
0 & 1 & 0 \\
0 & 1 & 1
\end{bmatrix}
x_{2}^0
\quad
\begin{bmatrix}
1 & 1 & 0 \\
0 & 1 & 0 \\
1 & 0 & 1
\end{bmatrix}
x_{2}^2
\end{align*}

\begin{itemize}
\item $|x_0| \geq 0$ 
\end{itemize}

$$
\begin{bmatrix}
1 & 0 & 1 \\
1 & 1 & 0 \\
0 & 1 & 1
\end{bmatrix}
\quad
\begin{bmatrix}
1 & 1 & 0 \\
0 & 1 & 1 \\
1 & 0 & 1
\end{bmatrix}
$$

\subsection{$x_1$, $x_2$ and $x_{12}$ IV}

\begin{itemize}
\item $|x_{12}| \geq 0$ 
\end{itemize}

\begin{align*}
&
\begin{bmatrix}
1 & 0 & 0 \\
0 & 1 & 0 \\
0 & 1 & 1
\end{bmatrix}
x_{12}^0
\end{align*}

\begin{itemize}
\item $|x_{1}|> 0$
\end{itemize}

\begin{align*}
&
\begin{bmatrix}
1 & 0 & 0 \\
1 & 1 & 0 \\
0 & 1 & 1
\end{bmatrix}
x_{1}^2
\end{align*}

\begin{itemize}
\item $|x_{2}| > 0$
\end{itemize}

\begin{align*}
&
\begin{bmatrix}
1 & 0 & 1 \\
0 & 1 & 0 \\
0 & 1 & 1
\end{bmatrix}
x_{2}^0
\quad
\begin{bmatrix}
1 & 1 & 0 \\
0 & 1 & 0 \\
1 & 0 & 1
\end{bmatrix}
x_{2}^2
\end{align*}

\begin{itemize}
\item $|x_0| \geq 0$ 
\end{itemize}

$$
\begin{bmatrix}
1 & 0 & 1 \\
1 & 1 & 0 \\
0 & 1 & 1
\end{bmatrix}
\quad
\begin{bmatrix}
1 & 1 & 0 \\
0 & 1 & 1 \\
1 & 0 & 1
\end{bmatrix}
$$

\subsection{$x_1$ and $x_{13}$ I}

\begin{itemize}
\item $|x_{13}| \geq 0$ 
\end{itemize}

\begin{align*}
&
\begin{bmatrix}
1 & 0 & 0 \\
0 & 1 & 1 \\
0 & 0 & 1
\end{bmatrix}
x_{13}^0
\end{align*}

\begin{itemize}
\item $|x_{1}|> 0$
\end{itemize}

\begin{align*}
&
\begin{bmatrix}
1 & 0 & 0 \\
0 & 1 & 1 \\
1 & 0 & 1
\end{bmatrix}
x_{1}^1
\end{align*}

\begin{itemize}
\item $|x_0| \geq 0$ 
\end{itemize}

$$
\begin{bmatrix}
1 & 0 & 1 \\
1 & 1 & 0 \\
0 & 1 & 1
\end{bmatrix}
\quad
\begin{bmatrix}
1 & 1 & 0 \\
0 & 1 & 1 \\
1 & 0 & 1
\end{bmatrix}
$$

\subsection{$x_1$ and $x_{13}$ II}

\begin{itemize}
\item $|x_{13}| \geq 0$ 
\end{itemize}

\begin{align*}
&
\begin{bmatrix}
1 & 0 & 0 \\
0 & 1 & 1 \\
0 & 0 & 1
\end{bmatrix}
x_{13}^0
\end{align*}

\begin{itemize}
\item $|x_{1}|> 0$
\end{itemize}

\begin{align*}
&
\begin{bmatrix}
1 & 0 & 0 \\
0 & 1 & 1 \\
0 & 1 & 1
\end{bmatrix}
x_{1}^0
\end{align*}

\begin{itemize}
\item $|x_0| \geq 0$ 
\end{itemize}

$$
\begin{bmatrix}
1 & 0 & 1 \\
1 & 1 & 0 \\
0 & 1 & 1
\end{bmatrix}
\quad
\begin{bmatrix}
1 & 1 & 0 \\
0 & 1 & 1 \\
1 & 0 & 1
\end{bmatrix}
$$

\subsection{$x_1$ and $x_{13}$ III}
\begin{itemize}
\item $|x_{13}| \geq 0$ 
\end{itemize}

\begin{align*}
&
\begin{bmatrix}
1 & 0 & 0 \\
1 & 1 & 0 \\
0 & 0 & 1
\end{bmatrix}
x_{13}^1
\end{align*}

\begin{itemize}
\item $|x_{1}|> 0$
\end{itemize}

\begin{align*}
&
\begin{bmatrix}
1 & 0 & 0 \\
0 & 1 & 1 \\
1 & 0 & 1
\end{bmatrix}
x_{1}^1
\quad
\begin{bmatrix}
1 & 0 & 0 \\
1 & 1 & 0 \\
0 & 1 & 1
\end{bmatrix}
x_{1}^2
\end{align*}

\begin{itemize}
\item $|x_0| \geq 0$ 
\end{itemize}

$$
\begin{bmatrix}
1 & 0 & 1 \\
1 & 1 & 0 \\
0 & 1 & 1
\end{bmatrix}
\quad
\begin{bmatrix}
1 & 1 & 0 \\
0 & 1 & 1 \\
1 & 0 & 1
\end{bmatrix}
$$

\subsection{$x_1$ and $x_{23}$   I}

\begin{itemize}
\item $|x_{23}| \leq 1$ 
\end{itemize}

\begin{align*}
&
\begin{bmatrix}
1 & 1 & 0 \\
0 & 1 & 0 \\
0 & 0 & 1
\end{bmatrix}
x_{23}^1
\end{align*}

\begin{itemize}
\item $|x_{1}|> 0$
\end{itemize}

\begin{align*}
&
\begin{bmatrix}
1 & 0 & 0 \\
0 & 1 & 1 \\
1 & 0 & 1
\end{bmatrix}
x_{1}^1
\quad
\begin{bmatrix}
1 & 0 & 0 \\
1 & 1 & 0 \\
0 & 1 & 1
\end{bmatrix}
x_{1}^2
\end{align*}

\begin{itemize}
\item $|x_0| \leq 1$ 
\end{itemize}

$$
\begin{bmatrix}
1 & 0 & 1 \\
1 & 1 & 0 \\
0 & 1 & 1
\end{bmatrix}
\quad
\begin{bmatrix}
1 & 1 & 0 \\
0 & 1 & 1 \\
1 & 0 & 1
\end{bmatrix}
$$

\subsection{$x_1$ and $x_{23}$   II}

\begin{itemize}
\item $|x_{23}| \leq 1$ 
\end{itemize}

\begin{align*}
&
\begin{bmatrix}
1 & 0 & 1 \\
0 & 1 & 0 \\
0 & 0 & 1
\end{bmatrix}
x_{23}^0
\end{align*}

\begin{itemize}
\item $|x_{1}|> 0$
\end{itemize}

\begin{align*}
&
\begin{bmatrix}
1 & 0 & 0 \\
0 & 1 & 1 \\
1 & 0 & 1
\end{bmatrix}
x_{1}^1
\quad
\begin{bmatrix}
1 & 0 & 0 \\
1 & 1 & 0 \\
0 & 1 & 1
\end{bmatrix}
x_{1}^2
\end{align*}

\begin{itemize}
\item $|x_0| \geq 0$ 
\end{itemize}

$$
\begin{bmatrix}
1 & 0 & 1 \\
1 & 1 & 0 \\
0 & 1 & 1
\end{bmatrix}
\quad
\begin{bmatrix}
1 & 1 & 0 \\
0 & 1 & 1 \\
1 & 0 & 1
\end{bmatrix}
$$

\subsection{$x_1$ and $x_{23}$   III}

\begin{itemize}
\item $|x_{23}| \leq 1$ 
\end{itemize}

\begin{align*}
&
\begin{bmatrix}
1 & 1 & 0 \\
0 & 1 & 0 \\
0 & 0 & 1
\end{bmatrix}
x_{23}^1
\end{align*}

\begin{itemize}
\item $|x_{1}|> 0$
\end{itemize}

\begin{align*}
&
\begin{bmatrix}
1 & 0 & 0 \\
0 & 1 & 1 \\
0 & 1 & 1
\end{bmatrix}
x_{1}^0
\end{align*}

\begin{itemize}
\item $|x_0| \geq 0$ 
\end{itemize}

$$
\begin{bmatrix}
1 & 0 & 1 \\
1 & 1 & 0 \\
0 & 1 & 1
\end{bmatrix}
\quad
\begin{bmatrix}
1 & 1 & 0 \\
0 & 1 & 1 \\
1 & 0 & 1
\end{bmatrix}
$$

\subsection{$x_1$ and $x_{23}$   IV}

\begin{itemize}
\item $|x_{23}| \leq 1$ 
\end{itemize}

\begin{align*}
&
\begin{bmatrix}
1 & 0 & 1 \\
0 & 1 & 0 \\
0 & 0 & 1
\end{bmatrix}
x_{23}^0
\end{align*}

\begin{itemize}
\item $|x_{1}|> 0$
\end{itemize}

\begin{align*}
&
\begin{bmatrix}
1 & 0 & 0 \\
0 & 1 & 1 \\
0 & 1 & 1
\end{bmatrix}
x_{1}^0
\end{align*}

\begin{itemize}
\item $|x_0| \geq 0$ 
\end{itemize}

$$
\begin{bmatrix}
1 & 0 & 1 \\
1 & 1 & 0 \\
0 & 1 & 1
\end{bmatrix}
\quad
\begin{bmatrix}
1 & 1 & 0 \\
0 & 1 & 1 \\
1 & 0 & 1
\end{bmatrix}
$$

\subsection{$x_1$ and $x_{123}$   I}

\begin{itemize}
\item $|x_{123}| \leq 1$ 
\end{itemize}

\begin{align*}
&
\begin{bmatrix}
1 & 0 & 0 \\
0 & 1 & 0 \\
0 & 0 & 1
\end{bmatrix}
x_{123}^0
\end{align*}

\begin{itemize}
\item $|x_{1}|> 0$
\end{itemize}

\begin{align*}
&
\begin{bmatrix}
1 & 0 & 0 \\
0 & 1 & 1 \\
1 & 0 & 1
\end{bmatrix}
x_{1}^1
\quad
\begin{bmatrix}
1 & 0 & 0 \\
1 & 1 & 0 \\
0 & 1 & 1
\end{bmatrix}
x_{1}^2
\end{align*}

\begin{itemize}
\item $|x_0| \geq 0$ 
\end{itemize}

$$
\begin{bmatrix}
1 & 0 & 1 \\
1 & 1 & 0 \\
0 & 1 & 1
\end{bmatrix}
\quad
\begin{bmatrix}
1 & 1 & 0 \\
0 & 1 & 1 \\
1 & 0 & 1
\end{bmatrix}
$$

\subsection{$x_1$ and $x_{123}$   II}

\begin{itemize}
\item $|x_{123}| \leq 1$ 
\end{itemize}

\begin{align*}
&
\begin{bmatrix}
1 & 0 & 0 \\
0 & 1 & 0 \\
0 & 0 & 1
\end{bmatrix}
x_{123}^0
\end{align*}

\begin{itemize}
\item $|x_{1}|> 0$
\end{itemize}

\begin{align*}
&
\begin{bmatrix}
1 & 0 & 0 \\
0 & 1 & 1 \\
0 & 1 & 1
\end{bmatrix}
x_{1}^0
\end{align*}

\begin{itemize}
\item $|x_0| \geq 0$ 
\end{itemize}

$$
\begin{bmatrix}
1 & 0 & 1 \\
1 & 1 & 0 \\
0 & 1 & 1
\end{bmatrix}
\quad
\begin{bmatrix}
1 & 1 & 0 \\
0 & 1 & 1 \\
1 & 0 & 1
\end{bmatrix}
$$

\end{document}